\algrenewcommand\textproc{\texttt}
\newtheorem{Theorem}{Theorem}
\newtheorem{Definition}{Definition}
\newtheorem{Property}{Property}
\newtheorem{Corollary}{Corollary}
\newcommand\Lcomment[1]{\textcolor{black}{#1}} 
\definecolor{dgreen}{rgb}{0,.6,0}
\newcommand\modified[1]{\textcolor{black}{#1}} 
\let\mybibitem\bibitem
\renewcommand{\bibitem}[1]{%
  \ifboolexpr{ 
  test {\ifstrequal{#1}{alawida2019deterministic}}
    or
  test {\ifstrequal{#1}{alawida2020enhanced}}
    or
 test {\ifstrequal{#1}{alawida2021novel}}
    or
 test {\ifstrequal{#1}{alawida2019digital}}
    or
 test {\ifstrequal{#1}{shujun2001pseudo}}
 or
 test {\ifstrequal{#1}{hua2017designing}}
 or
 test {\ifstrequal{#1}{lopez2016mackey}}
  }
 {\color{black}\mybibitem{#1}}
    {\color{black}\mybibitem{#1}}%
}
\begin{document}
\title{From Chaos to Pseudo-Randomness: A Case Study on the 2D Coupled Map Lattice}
\author{
Yong~Wang,
        ~Zhuo~Liu,
        ~Leo~Yu~Zhang,~\IEEEmembership{Member,~IEEE,}
        ~Fabio Pareschi,~\IEEEmembership{Senior Member,~IEEE,}\\
        Gianluca Setti,~\IEEEmembership{Fellow,~IEEE}
        and~Guanrong Chen,~\IEEEmembership{Life Fellow,~IEEE}
\thanks{Y. Wang and Z. Liu are with the College of Computer Science and Technology, Chongqing University of Posts and Telecommunications, China (email: wangyong1@cqupt.edu.cn; 
liuzhuo1987@outlook.com).}
\thanks{L. Zhang is with the School of Information Technology, Deakin
University, Australia (email: leo.zhang@deakin.edu.au).}
\thanks{F. Pareschi and G. Setti are with the Department of Electronics and Telecommunications, Politecnico di Torino, Italy (email: fabio.pareschi@polito.it; gianluca.setti@polito.it).}
\thanks{G. Chen is with the Department of Electrical Engineering, City University of Hong Kong, Hong Kong SAR (email: eegchen@cityu.edu.hk).}
}

\maketitle

\begin{abstract}
Applying chaos theory for secure digital communications is promising and it is well acknowledged that in such applications the underlying chaotic systems should be carefully chosen. However, the requirements imposed on the chaotic systems are usually heuristic, without theoretic guarantee for the resultant communication scheme. Among all the primitives for secure communications, it is well-accepted that (pseudo) random numbers are most essential. 
Taking the well-studied two-dimensional coupled map lattice ($2$D CML) as an example, this paper performs a theoretical study towards pseudo-random number generation with the $2$D CML.
In so doing, an analytical expression of the Lyapunov exponent (LE) spectrum of the $2$D CML is first derived. Using the LEs, one can configure system parameters to ensure the $2$D CML only exhibits complex dynamic behavior, and then collect pseudo-random numbers from the system orbits.
Moreover, based on the observation that least significant bit distributes more evenly in the (pseudo) random distribution, an extraction algorithm $\mathbf{E}$ is developed with the property that, when applied to the orbits of the $2$D CML, it can squeeze uniform bits. 
In implementation, if fixed-point arithmetic is used in binary format with a precision of $z$ bits after the radix point, $\mathbf{E}$ can ensure that the deviation of the squeezed bits is bounded by $2^{-z}$. 
\modified{Further simulation results demonstrate that the new method not only guide the $2$D CML model to exhibit complex dynamic behavior, but also generate uniformly distributed independent bits with good efficiency}. 
In particular, the squeezed pseudo-random bits can pass both NIST 800-22 and TestU01 test suites in various settings.
This study thereby provides a theoretical basis for effectively applying the $2$D CML to secure communications.

\end{abstract}
\begin{IEEEkeywords}
Chaos; Lyapunov Exponent; Random Number Generator; Secure Communication; $2$D Coupled Map Lattice.
\end{IEEEkeywords}
\IEEEpeerreviewmaketitle

\section{Introduction}
\label{Sec:intro}
\IEEEPARstart
{I}{n the} past two decades, chaotic systems have been widely used for secure communications, \modified{owing to their unique qualities including} complexity, pseudo-randomness and ergodicity \cite{jakimoski2001chaos,yang1997cryptography}. 
\modified{
These traits are related to confusion and diffusion, which are desired characteristics for good cryptographic primitives, according to \cite{shannon1949communication}.
For this reason, many encryption algorithms and secure communication schemes based on chaotic systems have been proposed,} creating a great deal of research across nonlinear dynamics and information security \cite{kocarev2001chaos,bergamo2005security,el2014chaos,zhang2017security}. %
Essentially, \modified{chaotic systems' inherent features play a significant part} in maintaining the schemes' security.
\modified{As a result}, selecting or constructing a chaotic system from the perspective of cryptographic applications has become \modified{a critical challenge for chaos-based secure communications}.

\modified{There are two different options on how to choose chaotic systems for secure communications \cite{alvarez2004breaking}. The first one is to choose simple chaotic systems since simple chaotic systems generally have fewer arithmetic operations, and thus less computational complexity. 
Secure communication schemes based on simple systems typically have better runtime efficiency \cite{hua2017one,zhang2017security}. 
However, if the structure of a simple chaotic system is not complicated enough, an attacker might be able to predict its chaotic orbits \cite{lopez2016mackey}, which are used for secure communication directly or indirectly \cite{cao2015high,miranian2012developing}.}
This probably brings some potential vulnerability to the secure communication schemes.

To alleviate this problem, on the one hand, \modified{one may consider enhancing the dynamic complexity of simple chaotic systems \cite{wang2016pseudorandom,zhou2014cascade,hua2017designing,hua2015dynamic,alawida2019deterministic,alawida2020enhanced,alawida2021novel,alawida2019digital}.
In this way, a better trade-off between security and efficiency can be obtained if the enhancing method is still efficient.}
On the other hand, one may use higher-dimensional chaotic systems to design better secure communication schemes \cite{li2018novel,dachselt2001chaos}. 
Compared with simple chaotic systems, \modified{such as the logistic and the tent maps}, a higher-dimensional chaotic system typically has more complicated dynamic behavior. \modified{Generally, it is more difficult, if not impossible, to predict the sequences generated from higher-dimensional chaotic systems.}

However, \modified{this gain in security is not free, since more arithmetic operations of a higher-dimensional chaotic system will consume more computational capacity and incur inferior running speed.}
To decrease the number of iterations of a higher-dimensional system and achieve higher efficiency, it is common to extract more pseudo-random bits from a single iteration of the \modified{underlying} higher-dimensional chaotic system \cite{li2006multiple,wang2016pseudorandom,lv2018novel}. {But this is heuristic and depends on the pseudo-randomness of the underlying chaotic system}. 
\modified{Another method is to iterate the higher-dimensional chaotic system in parallel to improve the efficiency. However, it only applies to higher-dimensional chaotic systems that support parallelization implementation.}

Besides the consideration about the trade-off between security and efficiency, another critical issue is the effect of finite-precision representation of chaotic orbits \cite{wang2016theoretical}. 
\modified{Generally, chaotic orbits are real numbers, and real numbers are then truncated and represented in either floating-point or  fixed-point arithmetic on digital computer}. 
\modified{When expressed with a certain precision under the fixed-point arithmetic, all chaotic systems will inevitably degenerate to  become periodic.}
Worse yet, if chaotic systems are not configured correctly, their orbits, even represented by real numbers, could fall into periodic with a short \modified{period}.
Without deliberate design, therefore, these problems will degrade the security of chaos-based encryption algorithms. 
\modified{Based on these observations}, to design a fully-fledged chaotic secure communication system, it is \modified{imperative and indispensable to have some theoretic guidelines for ensuring chaotic orbits to run in full chaotic state and the pseudo-random bits extracted from digitized orbits} are evenly distributed with a sufficiently long \modified{period}.
The present work tries to address this issue by taking the two-dimensional coupled map lattice (2D CML) for a case study.

The coupled map lattice \cite{kaneko1989pattern} is a classic model of spatiotemporal chaos. 
It is a complex two-directional chaotic system coupled with multiple identical simple chaotic maps, and it has excellent scalability. 
\modified{All the nodes in the CML have the same structure, so their arithmetic operations can be computed individually, which makes it  possible to run the CML in parallel by certain special design \cite{wang2011parallel}.}
Moreover, since the CML consists of multiple nodes, it will degenerate to a periodic system only \modified{if} all the nodes are in periodic state \modified{simultaneously}. \modified{From this view, even} under finite precision representation, the period of CML model is longer than that of a single node (a simple chaotic map) \cite{coombes2000period}, which can easily make the orbital period to be long enough for practical applications.

\modified{CML's potential has been utilised in recent years to create numerous secure communication algorithms \cite{kumar2018cryptographic,lv2018novel,wong2008fast,fu2018image,hussain2015image,wang2008one,zhang2014symmetric,newman2004finding} due to the qualities described above.}
For example, by using the CML as the core of diffusion operations, a cryptographic model is proposed to guarantee better security of information processes \cite{kumar2018cryptographic}. 
\modified{The work \cite{lv2018novel} proposes a novel CML-based pseudo-random number generator (PRNG) with strong potential for cryptographic applications.
The sequences generated from CML are used for constructing nonlinear substitution boxes (S-boxes) \cite{wong2008fast}, which is a basic building block for many encryption algorithms.
In \cite{fu2018image}, the way of designing S-boxes based on CML is assembled to an image encryption algorithm to enhance the image privacy.
CML can also be used with other technologies, such as DNA coding, to create more efficient encryption methods \cite{hussain2015image}.}
From a theoretical perspective, to further enhance the complexity of dynamic behavior, the one-dimensional CML is extended to the two-dimensional form \cite{wang2008one}. Some hash functions and encryption algorithms are \modified{heuristically} developed based on the complexity, diffusion and randomness of the $2$D CML \cite{zhang2014symmetric,newman2004finding}.

\modified{When different values for the CML parameters are specified, the state of a CML system (either $1$D or $2$D) can exhibit diverse patterns, such as frozen random pattern, competition intermittency, and fully developed chaos, according to \cite{kaneko1989pattern,biswas2016patterns}. To our knowledge, however, there is little theoretical study on the dynamics of $2$D CML, despite the fact that the $2$D CML has previously been heuristically used to secure communications \cite{wang2008one,zhang2014symmetric,newman2004finding}. Furthermore, as previously stated,} heuristically extracting pseudo-random bits from chaotic orbits would leave a security flaw for the integrated system, which is also the case with the $2$D CML.


To address these challenges, this paper studies the dynamics of $2$D CML as well as pseudo-random bit generation from orbits of $2$D CML in a theoretical perspective. 
This work makes the following contributions: 
\begin{itemize}
    \item For a commonly used $2$D CML system, the analytic formula between the Lyapunov exponents (LEs) and the parameters of the model is derived, which provides theoretical guarantee for ensuring the $2$D CML to run in fully developed chaotic state. 
   \modified{
    \item Focusing on the fixed-point arithmetic, an extraction algorithm $\mathbf{E}$ is introduced to process the digitized orbits of the $2$D CML, which ensures the pseudo-random bits extracted from the orbits to be uniform under certain mild assumption. This extraction algorithm, concerning the acquisition of uniform bits from non-uniform random sources, is also of research interest in its own right. 
    \item Extensive experiments are performed to verify the theoretical results. In particular, it is demonstrated that the extracted bits from the orbits of $2$D CML successfully pass both NIST 800-22 and TestU01 tests \cite{bassham2010sp,l2007testu01} under different settings. Moreover, efficiency comparison validates that the proposed PRNG is faster than the only known theoretic sound chaotic PRNG \cite{shujun2001pseudo}, and also faster than some recently designed heuristic PRNGs \cite{alawida2020enhanced,lv2018novel}. }
\end{itemize}

The rest of the paper is organized as follows. Some preliminary knowledge is given in Sec.~\ref{sec:Pre}. In Sec.~\ref{sec:method}, theoretical Lyapunov exponent analysis of the $2$D CML is presented and the practical question of how to extract uniform random bit from the orbits of the $2$D CML is addressed. 
In Sec.~\ref{sec:SimTest}, some numerical tests are presented to verify the theoretical results. Finally, conclusion is drawn in Sec.~\ref{sec:conclusion}.

\section{Preliminaries}
\label{sec:Pre}
To prepare for the technical development of this work, some preliminary knowledge is provided in this section. 

\begin{figure}[ht]
	\centering
	\includegraphics[scale=0.35]{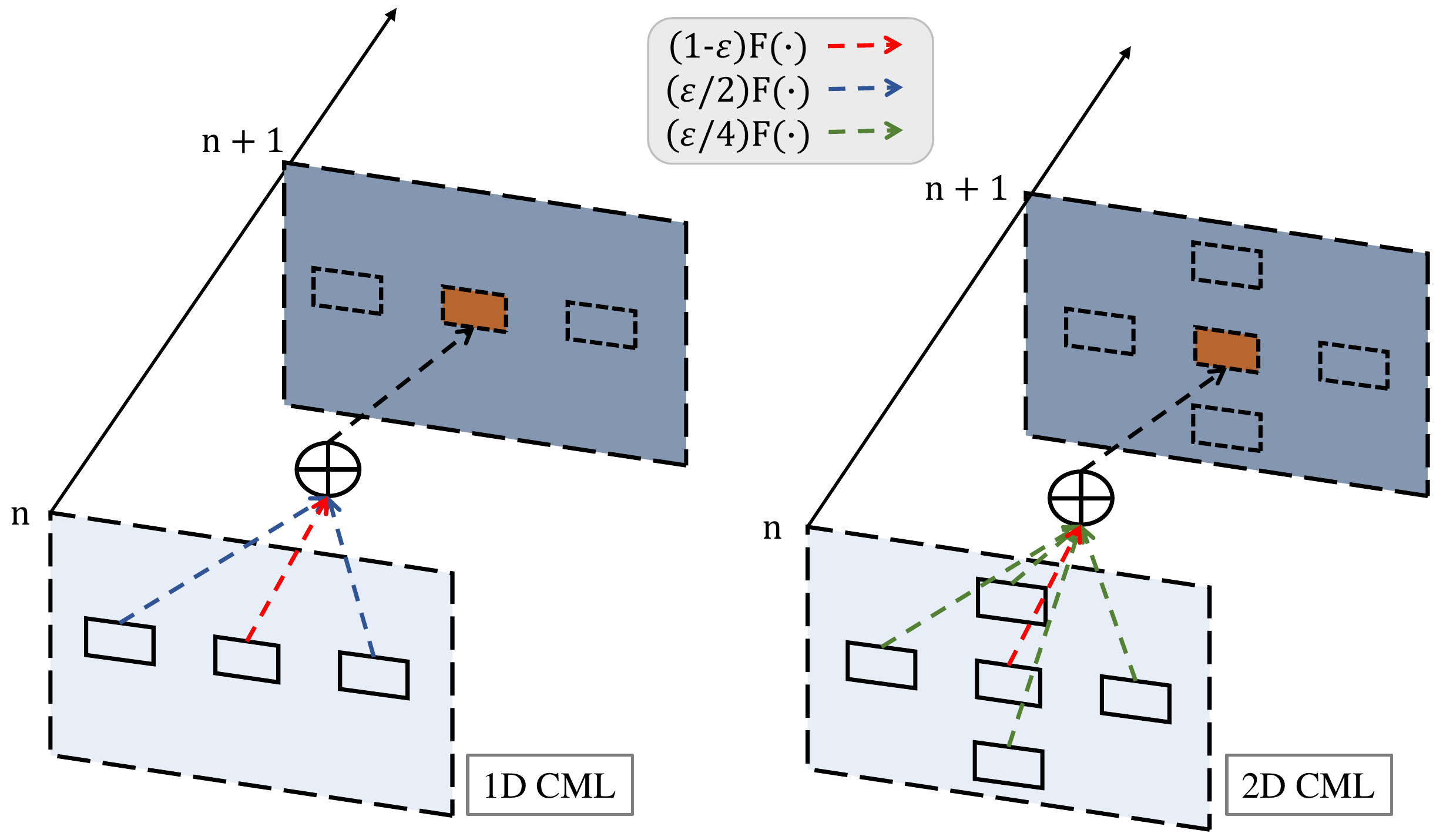}
	\caption{The diagram of the 1D and 2D CML models.}
	\label{Fig:1d2dCML}	
\end{figure}

\subsection{Two CML Models}
\label{SubSec:cml}
The first CML model, as depicted in Fig.~\ref{Fig:1d2dCML}, is proposed by Kaneko in \cite{kaneko1989pattern} and is one of the classic platform for studying spatiotemporal chaos. 
\begin{Definition} \cite{kaneko1992overview} The general one-dimensional nearest-neighboring CML model is described by
\begin{equation*}
x_{n+1}^u=(1-\varepsilon)\textnormal{F}(x_{n}^u)+\frac{\varepsilon}{2}\left[\textnormal{F}(x_{n}^{u+1})+\textnormal{F}(x_{n}^{u-1})\right], 
\end{equation*}
where $n=1,2,...,$ is the time index, $u=1, 2, ..., R,$ is the space index, $x_{n}^u$ is the state value of the  $u$-th node at time $n$, $\varepsilon \in (0,1)$ is the coupling parameter, and $F$ is a chaotic map.  The periodic boundary condition is $x_{n}^{0}=x_{n}^{R}$.
\end{Definition}

In the lattice, the coupling parameter $\varepsilon$ between the nodes plays an important role in maintaining the complex dynamic behavior of the model. A tiny change in one node can affect other nodes and even be diffused to the whole lattice after some iterations. To further enhance the dyamic complexity, the one-dimensional CML model is extended to the two-dimensional version as follows.
\begin{Definition} The two-dimensional CML model is defined by
\begin{IEEEeqnarray}{rCl}
\label{Eq:2dCML}
x_{n+1}^{u,v}  &=& (1-\varepsilon)\textnormal{F}(x_{n}^{u,v})+\frac{\varepsilon}{4} \left[\textnormal{F}(x_{n}^{u+1,v})+\textnormal{F}(x_{n}^{u-1,v})  \right. \nonumber \\
&+ & \left. \textnormal{F}(x_{n}^{u,v+1})+\textnormal{F}(x_{n}^{u,v-1})\right],
\end{IEEEeqnarray}
where \modified{$u=1,2,...,R$} and \modified{$v=1,2,...,L$} are the row and column indexes of the nodes, respectively. The periodic boundary conditions are $x_{n}^{u+R,v}=x_{n}^{u,v}$ and  $x_{n}^{u,v+L}=x_{n}^{u,v}$. 
\end{Definition}

\subsection{Lyapunov Exponent}
\label{subsec:LE}
Regarding nonlinear dynamic systems, the Lyapunov exponent (LE) is the key to measure chaotic behaviors. The maximum LE of the 
system $x_{n+1}=\textnormal{F}(x_{n})$ is defined as \cite{wolf1985determining}
\begin{equation}
\begin{split}
\textnormal{LE}&={\lim_{n \to \infty}\frac{1}{n}}\ln\left| 
\prod_{m=0}^{n}\textnormal{F}'(x_m)\right|. 
\end{split}
\label{Eq:LEdef}
\end{equation}
\Lcomment{The value of LE is affected by the parameters in $\textnormal{F}$ and a positive} $\textnormal{LE}$ indicates that the dynamic system is chaotic if its orbits are globally bounded. Moreover, the larger the LE is, the more complicated chaotic behavior the system has. For higher-dimensional systems or coupled systems like CML, it can have multiple LEs, and a strictly positive maximum LE indicates chaos.

For cryptographic confusion and diffusion suggested by Shannon \cite{shannon1949communication}, it is desirable to run a chaotic system with a  large LE, so that any tiny change of the system parameters will spread out and be amplified gradually. For the requirement of pseudo-randomness \cite{nian2011pseudo}, it is necessary to run the underlying chaotic system with a large LE so as to minimize the correlation between orbits produced by proximal parameters.

\subsection{Non-uniform Randomness and Non-uniform Pseudo-Randomness}
\label{subsec:prd}
Let $x_{n+1} = \textnormal{F}(x_n)$ be an iterative chaotic system with value range $(0,1)$. 
\Lcomment{Given the initial condition and system parameters, the orbits of this system are said to follow a pseudo-random distribution. Conceptually, by iterating $ \textnormal{F}$, the distribution function of $\{x_i\}_{i=0}^{\infty}$, $\textnormal{D}(\textnormal{F}, x_0)$, can be obtained.} 
\begin{Definition} 
\label{def:prd}
$\textnormal{D}(\textnormal{F}, x_0)$ is a pseudo-random distribution ({PRD}) if there exists a non-uniform true random distribution with the same density and there is no probabilistic polynomial-time (p.p.t.) algorithm that can distinguish them with a non-negligible probability.
\end{Definition}

In the studies of true random number generation, it is common to employ certain entropy extractor to squeeze uniform true random numbers from non-uniform true random source \cite{von1951random}. While in the studies of pseudo-random number generation, there is no explicit tool that allows one to extract uniform pseudo-random bits (the functionality of PRNG) from a non-uniform pseudo-random source with an  unknown distribution, and the introduction of PRD could be seen as one particular way to bridge this gap, as will be further discussed in Sec.~\ref{subsec:randomness}. 
In fact, it is easy to build a PRD by inversely sampling the output of a PRNG \cite{devroye1986sample}, \Lcomment{but not vice versa. From this  point of view,} it is reasonable to believe that, like true random numbers obtained by squeezing non-uniform true random source, uniform pseudo-random numbers may be obtained by manipulating and compressing certain non-uniform PRD.

\subsection{Independence Test}
\label{SubSec:test}
\begin{Definition} 
Let $x$ and $y$ be two random variables \Lcomment{that obey two random distributions}, and $x_{i}$ and $y_{i}$ ($i \in [1, K]$) be $K$ independent observations of $x$ and $y$, respectively. The Pearson correlation coefficient $k_{xy}$ between $x$ and $y$ is 
\begin{equation*}
k_{xy}=\frac{ \sum\limits_{i=1}^Kx_iy_i-K\bar{x}\bar{y}}
{\sqrt{\left( \sum\limits_{i=1}^Kx_i^2-K\bar{x}^2\right)\left( \sum\limits_{i=1}^Ky_i^2-K\bar{y}^2\right)}},
\end{equation*}
where $\bar{x}=\dfrac{1}{K}\sum\limits_{i=1}^Kx_i$ and $\bar{y}=\dfrac{1}{K}\sum\limits_{i=1}^Ky_i$.
\end{Definition}
If $x$ and $y$ are independent of each other, then Fisher's transformation of $k_{xy}$,
\begin{equation*}
\label{Eq:test}
D=\dfrac{\sqrt{K-3}}{2}\ln\left|\dfrac{1+k_{xy}}{1-k_{xy}}\right|,
\end{equation*}
will approximately follow the standard normal distribution. By setting a significance level $\alpha$, one can compare whether the empirical value D falls within the confidence interval $(\Phi^{-1}(\alpha/2), -\Phi^{-1}(\alpha/2))$, where $\Phi$ is the cumulative distribution function of the standard normal distribution. For a more reliable result, multiple tests should be applied and other tests, such as the chi-square test of independence and Kolmogorov-Smirnov test, should also be used. More importantly, one can apply this kind of tests between a PRD and a true random distribution, provided that no p.p.t. algorithm can distinguish this PRD from a certain true random distribution. \modified{Also, as will be used later in Sec.~\ref{subsec:randomness}, the test can be applied to two PRDs if both of them cannot be distinguished from true random distributions.}

\section{From Chaos to Pseudo-Randomness}
\label{sec:method}

With the preliminary knowledge introduced above, a theoretic analysis of the $2$D CML is presented in this section. Firstly, the analytical expression of all the LEs of the $2$D CML is deduced in Sec.~\ref{subsec:LEAna}, which will be used to enforce the system to run in fully developed chaotic state by using appropriate parameters. Secondly, it will discuss how to extract pseudo uniform bits with minimum bias from a pseudo-random distribution. Based on this result, an end-to-end extraction algorithm $\mathbf{E}$ will be designed to distill random bits from the orbits of $2$D CML with theoretical support.

\subsection{Lyapunov Exponent Formula of the $2$D CML}
\label{subsec:LEAna}

As stated in Sec.~\ref{Sec:intro}, for either designing chaotic ciphers or producing random numbers, one should carefully set the parameters to make the chaotic system to operate in a fully developed chaotic mode. 
\Lcomment{As stated in Sec.~\ref{subsec:LE},  this can be accomplished by carefully selecting the parameter(s) of the chaotic system.}
Here,  since LE is defined asymptotically as a limit, the synchronization of node values will be used  to derive an analytical expression of all LEs for the $2$D CML.

To begin with, convert the $2$D CML model with $R$ rows and $L$ columns to a one-dimensional model by rearranging the nodes according to the order from left to right and from top to bottom. Thus, Eq.~(\ref{Eq:2dCML}) is converted to
\begin{equation}
\label{Eq:2dto1d}
\begin{split}
x_{n+1}^{(u-1)L+v}=&(1-\varepsilon)\textnormal{F}(x_{n}^{(u-1)L+v})+\frac{\varepsilon}{4}\left[\textnormal{F}(x_{n}^{u\times L+v})\right. \\
&+\textnormal{F}(x_{n}^{(u-2)L+v})+\textnormal{F}(x_{n}^{(u-1)L+v+1})\\
&\left.+\textnormal{F}(x_{n}^{(u-1)L+v-1})\right].
\end{split}
\end{equation}
Correspondingly, the periodic boundary conditions will be changed to  $x_{n}^{(u+R)L+v}=x_{n}^{u\times L+v}$ and $x_{n}^{(u-1)L+v+L}=x_{n}^{(u-1)L+v}$. 
All the node values in the converted model can be arranged  as an $(R\times L)$-dimensional column vector, $\textbf{z}_{n}=[x_{n}^{1},x_{n}^{2},\cdots,x_{n}^{L},x_{n}^{L+1},x_{n}^{L+2},\cdots,x_{n}^{2L}, \cdots,x_{n}^{R\times L}]^{T}$. 
Similarly to the method used for the $1$D CML \cite{ding1997stability}, one can differentiate Eq.~(\ref{Eq:2dto1d}) and evaluate the derivatives along their synchronized trajectories. Upon synchronization, all the entries of $\textbf{z}_{n}$ become equal, i.e., 
$ x_{n}^{1}=x_{n}^{2}=\cdots=x_{n}^{L}=x_{n}^{L+1}=x_{n}^{L+2}=\cdots=x_{n}^{2L}= \cdots=x_{n}^{R\times L}=x_{n}$. 

Along the synchronized trajectory, one has the derivatives of $\textnormal{F}$ as
\begin{equation}
\label{Eq:2dCMLtrajectory}
\begin{split}
&\textnormal{F}'(x_{n}^{(u-1)L+v})=\textnormal{F}'(x_{n}^{u\times L+v})=\textnormal{F}'(x_{n}^{(u-2)L+v})\\
&=\textnormal{F}'(x_{n}^{(u-1)L+v+1})=\textnormal{F}'(x_{n}^{(u-1)L+v-1})=\textnormal{F}'(x_{n}),
\end{split}
\end{equation}
and the differentials of the 2D CML are
\begin{equation}
\label{Eq:2dCMLdifferentiate}
\begin{split}
\delta(x_{n+1}^{(u-1)L+v})=&
(1-\varepsilon)\textnormal{F}'(x_{n}^{(u-1)L+v})\delta(x_{n}^{(u-1)L+v})\\&+\dfrac{\varepsilon}{4}\left[\textnormal{F}'(x_{n}^{u\times L+v})\delta(x_{n}^{u\times L+v})\right.\\
&+\textnormal{F}'(x_{n}^{(u-2)L+v})\delta(x_{n}^{(u-2)L+v})\\
&+\textnormal{F}'(x_{n}^{(u-1)L+v+1})\delta(x_{n}^{(u-1)L+v+1})\\
&\left.+\textnormal{F}'(x_{n}^{(u-1)L+v-1})\delta(x_{n}^{(u-1)L+v-1})\right].
\end{split}
\end{equation}
By incorporating Eq.~(\ref{Eq:2dCMLtrajectory}), Eq.~(\ref{Eq:2dCMLdifferentiate}) can be written as
\begin{IEEEeqnarray}{rCL}
\delta(x_{n+1}^{(u-1)L+v})&=&\textnormal{F}'(x_{n}) \left[(1-\varepsilon)\delta(x_{n}^{(u-1)L+v}) \right. \nonumber \\ 
&& +\dfrac{\varepsilon}{4}(\delta(x_{n}^{u\times L+v})+\delta(x_{n}^{(u-2)L+v})  \nonumber \\
&& \left. +\delta(x_{n}^{(u-1)L+v+1})+ \delta(x_{n}^{(u-1)L+v-1}))\right]. \IEEEeqnarraynumspace
\label{Eq:2dCMLsimply}
\end{IEEEeqnarray}

\Lcomment{Applying Eq.~(\ref{Eq:2dCMLsimply}) to all the ($R\times L$) elements produced at time instants ($n+1$) and $n$, i.e., $\textbf{z}_{n+1}$ and $\textbf{z}_{n}$}, one can get a matrix form of Eq.~(\ref{Eq:2dCMLdifferentiate}), as,
\begin{equation*}
\delta\textbf{z}_{n+1}=\textbf{J}_{n}\delta\textbf{z}_{n},
\end{equation*}
where $\textbf{J}_{n}$ is the Jacobin matrix satisfying $\textbf{J}_{n}=\textnormal{F}'(x_{n})\textbf{K}$ and $\textbf{K}$ is an $ (R\times L)\times (R\times L)$ circulant matrix in the form of
\begin{equation*}
 \mathbf{K} = 
\begin{bmatrix}
\mathbf{A_{1}} & \mathbf{A_{2}}  &\cdots & \mathbf{A_{R}} \\
\mathbf{A_{R}} & \mathbf{A_{1}}  &\cdots & \mathbf{A_{R-1}} \\
\vdots         & \vdots          &\ddots & \vdots\\
\mathbf{A_{2}} & \mathbf{A_{3}}  &\cdots & \mathbf{A_{1}}  \\
\end{bmatrix}_{(R\times L)\times (R\times L)}.
\end{equation*}
Here, the matrices $\mathbf{A_1}, \mathbf{A_2}, \cdots, \mathbf{A_R}$ are given by
\begin{equation*}
\mathbf{\mathbf{A}_{1}}=
\begin{bmatrix}
1-\varepsilon &\varepsilon/4 &0 &\cdots &\varepsilon/4 \\
\varepsilon/4 &1-\varepsilon &\varepsilon/4 &\ddots & 0\\
0 &\varepsilon/4 &1-\varepsilon &\ddots & 0\\
\vdots &\vdots  &\ddots &\ddots &\varepsilon/4\\
\varepsilon/4 &  0 &\cdots &\varepsilon/4 &1-\varepsilon  \\
\end{bmatrix}_{L\times L},
\end{equation*}
\begin{equation*}
\mathbf{\mathbf{A}_{2}=\mathbf{A}_{R}}=
\begin{bmatrix}
\varepsilon/4 & \quad  &\quad \\
\quad  & \ddots &\quad  \\
\quad & \quad &\varepsilon/4 \\
\end{bmatrix}_{L\times L},
\end{equation*}
and
\begin{equation*}
\mathbf{\mathbf{A}_{3}=\mathbf{A}_{4}=\cdots =\mathbf{A}_{R-1}}=
\begin{bmatrix}
0       & \cdots    & 0 \\
\vdots  &\ddots     & \vdots\\
0       & \cdots    & 0 \\
\end{bmatrix}_{L\times L}.
\end{equation*}

\Lcomment{To calculate the LEs of the 2D CML, one needs to multiply all the Jacobin matrices}. 
Let $\lambda$ represent an eigenvalue of the matrix $\textbf{K}$ and denote $\textbf{G}=\textbf{J}_{1}\times \textbf{J}_{2}\times \cdots \times \textbf{J}_{n}=\textbf{K}^{n} \cdot \left(\prod_{m=1}^{n}\textnormal{F}'(x_{m})\right)$. It is easy to verify that the eigenvalue of $\textbf{G}$ is $\lambda^{n} \cdot \left(\prod_{m=1}^{n}\textnormal{F}^{'}(x_{m})\right)$.
According to  Eq.~(\ref{Eq:LEdef}), the LEs of $2$D CML are given by the following formula, parametrized by $\lambda$:
\begin{IEEEeqnarray}{rCL}
\label{Eq:LE}
\textnormal{LEs}&=&
{\lim_{n \to \infty}\frac{1}{n} }\ln \left| \textbf{J}_{1}\times \textbf{J}_{2}\times \cdots \times \textbf{J}_{n} \right| \nonumber \\
&=& {\lim_{n \to \infty}\frac{1}{n} }\ln \left| \lambda^{n}\prod_{m=1}^{n}\textnormal{F}'(x_m) \right|  \nonumber  \\
&=& \lim_{n \to \infty}\frac{1}{n} \ln\left| \prod_{m=1}^{n}F'(x_{m})\right|
+\ln\left| \lambda \right|.
\end{IEEEeqnarray}

\Lcomment{Note that the first term in Eq.~(\ref{Eq:LE}) is precisely the LE of the local chaotic map $\textnormal{F}$, that is, $ \lim_{n \to \infty}\frac{1}{n} \ln\left| \prod_{m=1}^{n}F'(x_{m})\right| = \textnormal{LE}_{\textnormal{F}}$.}
{
So, to calculate the LE of the $2$D CML, one has to compute the second term $\ln\left| \lambda \right|$.}
{Here, $\lambda$ represents any eigenvalue of $\textbf{K}$ and $\textbf{K}$ is a block circulant matrix with most of the blocks being zero (i.e., $\textbf{A}_{3}=\cdots=\textbf{A}_{R-1}=\textbf{0}$). Its characteristic polynomial is \cite{hongbo2013characteristic}}
\begin{equation}
\label{Eq:polynomial}
\prod_{r=0}^{R-1}\left|\textbf{A}_{1}+\textbf{A}_{2}\omega_{r}+\textbf{A}_{R}\omega_{r}^{R-1}- \lambda\textbf{I}\right|, 
\end{equation}
where $\textbf{I}$ is the identity matrix and 
\begin{equation*}
    \omega_{r}=\exp\left(i  \dfrac{2\pi r}{R}\right)=\cos\dfrac{2\pi r}{R}+i  \sin\dfrac{2\pi r}{R},
\end{equation*}
for \Lcomment{$r=0,1,\cdots,R-1$}. Expanding the innner part of Eq.~(\ref{Eq:polynomial}), one gets
\begin{IEEEeqnarray*}{rCl}
\textbf{A} 
&=& \textbf{A}_{1}+\textbf{A}_{2}\omega_{r}+\textbf{A}_{R}\omega_{r}^{R-1}- \lambda\textbf{I} \\
&=& \begin{bmatrix}
j& \varepsilon/4 & 0 &\cdots &\varepsilon/4 \\
\varepsilon/4& j &  \varepsilon/4 &\cdots & 0 \\
\vdots & \vdots & \vdots &\ddots & \vdots\\
\varepsilon/4 & 0 & 0 & \cdots &j\\
\end{bmatrix}_{L\times L},
\end{IEEEeqnarray*}
where $j=1-\varepsilon+\dfrac{\varepsilon}{4}\omega_{k}+\dfrac{\varepsilon}{4}\omega_{k}^{R-1}-\lambda$. It is clear that  $\textbf{A}$ is still a circulant  matrix. Applying the same technique as used above \cite{hongbo2013characteristic}, one obtains 
\begin{equation}
\label{Eq:eigenvalues}
 \lambda = 1-\varepsilon+\dfrac{\varepsilon}{4}\omega_r+\dfrac{\varepsilon}{4}\omega_r^{R-1}+\dfrac{\varepsilon}{4}\mu _{l}+\dfrac{\varepsilon}{4}\mu _{l}^{L-1},
\end{equation}
where $\mu_{l}=\exp\left( i\dfrac{2\pi l}{L}\right)=\cos\dfrac{2\pi l}{L}+i\sin\dfrac{2\pi l}{L},$ with $l = 0, 1, \cdots, L-1$.

{\modified{Substitute Eq.~(\ref{Eq:eigenvalues}) into Eq.~(\ref{Eq:LE}) gives all the $(R\times L)$ LEs of $2$D CML, i.e.,}}
\begin{equation} 
\label{Eq:caculatele}
  \textnormal{LEs}=\textnormal{LE}_{\textnormal{F}}+\ln\left|1-\varepsilon+\dfrac{\varepsilon}{2}\left(\cos\frac{2\pi r}{R}+\cos\frac{2\pi l}{L}\right)\right|,
\end{equation}
for $r = 0,1,\cdots, R-1$ and $l=0,1,\cdots,L-1$.
From this analytical formula, it is easy to get the  following results.

\begin{Theorem}
\label{MLEthero}
The maximum Lyapunov exponent of the $2$D CML model is determined \modified{by the local chaotic map $\textnormal{F}$}.
\end{Theorem}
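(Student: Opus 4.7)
The plan is to read off the maximum LE directly from the closed-form expression in Eq.~(\ref{Eq:caculatele}). Since $\textnormal{LE}_{\textnormal{F}}$ is a constant that does not depend on the indices $(r,l)$, finding the maximum LE reduces to maximizing the second term $\ln\left|1-\varepsilon+(\varepsilon/2)\bigl(\cos(2\pi r/R)+\cos(2\pi l/L)\bigr)\right|$ over $r\in\{0,1,\ldots,R-1\}$ and $l\in\{0,1,\ldots,L-1\}$. So the proof is essentially a one-variable optimization after a convenient change of variable.

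Next I would set $c(r,l)=\cos(2\pi r/R)+\cos(2\pi l/L)$ and rewrite the argument of the logarithm as $1-\varepsilon\bigl(1-c/2\bigr)$. Because each cosine lies in $[-1,1]$, $c$ ranges over $[-2,2]$, so $1-c/2\in[0,2]$ and hence $\varepsilon(1-c/2)\in[0,2\varepsilon]$. Therefore the argument lies in $[1-2\varepsilon,\,1]$. For $\varepsilon\in(0,1)$ this interval is contained in $(-1,1]$, so its absolute value is at most $1$, with equality iff $c=2$. Equality in the two cosines simultaneously, over the discrete lattice $(r,l)$, forces $r=0$ and $l=0$. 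Consequently the second term of Eq.~(\ref{Eq:caculatele}) attains its maximum value $\ln 1=0$ exactly at $(r,l)=(0,0)$, and the maximum LE of the 2D CML equals $\textnormal{LE}_{\textnormal{F}}$, i.e., the LE of the local map $\textnormal{F}$ alone.

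The only mildly subtle point is ensuring that the upper bound on $|1-\varepsilon(1-c/2)|$ is indeed $1$ over the entire admissible range, since taking the absolute value could in principle allow the lower end of the interval to beat the upper end; this is exactly where the hypothesis $\varepsilon\in(0,1)$ is used, because it forces $|1-2\varepsilon|<1$. I would state this as a brief verification rather than a separate lemma. Everything else is a direct substitution, so I do not anticipate any real obstacle; the genuine content of the theorem is the earlier derivation of Eq.~(\ref{Eq:caculatele}), and this corollary just harvests the maximum from it.
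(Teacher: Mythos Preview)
Your proposal is correct and follows the same approach as the paper: both arguments read the maximum directly from Eq.~(\ref{Eq:caculatele}) and identify it at $(r,l)=(0,0)$. The paper's proof is a single sentence asserting this without justification, whereas you actually verify that the second term is nonpositive and attains $0$ only at the origin using the constraint $\varepsilon\in(0,1)$; your version is strictly more complete.
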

\begin{proof}
According to Eq.~(\ref{Eq:caculatele}), the maximum LE of $2$D CML is given by  $\textnormal{LE}=\textnormal{LE}_{\textnormal{F}}$, when $r=0$ and $l=0$.
\end{proof}

\begin{Corollary}
The maximum Lyapunov exponent is independent of the size of the $2$D CML model.
\end{Corollary}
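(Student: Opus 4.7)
The plan is to derive the corollary as an immediate consequence of Theorem~\ref{MLEthero} together with the closed-form expression in Eq.~(\ref{Eq:caculatele}). First I would recall that the size of the $2$D CML is encoded entirely by the pair $(R,L)$, which enter the LE formula only through the angles $2\pi r/R$ and $2\pi l/L$ for indices $r=0,1,\ldots,R-1$ and $l=0,1,\ldots,L-1$. The quantity $\textnormal{LE}_{\textnormal{F}}$, by contrast, is an intrinsic property of the local map $\textnormal{F}$ and carries no dependence on $R$ or $L$.

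Next I would argue that among all $(R\times L)$ eigenvalues the maximum of $\ln\!\bigl|1-\varepsilon+\tfrac{\varepsilon}{2}(\cos\tfrac{2\pi r}{R}+\cos\tfrac{2\pi l}{L})\bigr|$ is attained at $r=l=0$, where the expression inside the absolute value reduces to $1-\varepsilon+\varepsilon=1$ and the logarithm contributes $0$. Thus the maximum LE equals $\textnormal{LE}_{\textnormal{F}}$, exactly as shown in the proof of Theorem~\ref{MLEthero}. Since this value is produced by the $(r,l)=(0,0)$ mode, which exists for every admissible pair $(R,L)$ and yields the same numerical value in all cases, the maximum LE cannot depend on the lattice dimensions.

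The only subtlety worth pinning down is that $r=l=0$ genuinely maximises the coupling term; this follows because $\cos(2\pi r/R)+\cos(2\pi l/L)\leq 2$ with equality iff $r=l=0$, so every other mode contributes a strictly non-positive logarithm and can only decrease the total. I do not anticipate any real obstacle here, as the corollary is essentially a one-line reading of Eq.~(\ref{Eq:caculatele}); the small care needed is simply to make explicit that size-dependent terms vanish at the maximising index and that this index is always available regardless of $R$ and $L$.
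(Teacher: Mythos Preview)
Your proposal is correct and follows essentially the same approach as the paper: the corollary is presented there as an immediate consequence of Theorem~\ref{MLEthero} (no separate proof is given), and your argument simply spells out why the maximum LE, being $\textnormal{LE}_{\textnormal{F}}$ at $r=l=0$, carries no dependence on $R$ or $L$. The only small point worth noting is that your justification that $r=l=0$ maximises the coupling term tacitly uses $\varepsilon\in(0,1)$ to ensure the quantity $1-\varepsilon+\tfrac{\varepsilon}{2}(\cos\tfrac{2\pi r}{R}+\cos\tfrac{2\pi l}{L})\in[1-2\varepsilon,1]$ never exceeds $1$ in absolute value, which is indeed the standing assumption of the paper.
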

\begin{Corollary}
\label{increasingMLE}
In $2$D CML, increasing $\textnormal{LE}_{\textnormal{F}}$ will increase the maximum Lyapunov exponent  of the whole 2D lattice.
\end{Corollary}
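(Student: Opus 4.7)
The plan is to derive this corollary as an immediate consequence of Theorem~\ref{MLEthero}. Recall that the theorem, established via Eq.~(\ref{Eq:caculatele}) with the choice $r=0,\;l=0$, gives the identity
\begin{equation*}
    \textnormal{LE}_{\max} \;=\; \textnormal{LE}_{\textnormal{F}} + \ln\left|1-\varepsilon+\tfrac{\varepsilon}{2}(1+1)\right| \;=\; \textnormal{LE}_{\textnormal{F}},
\end{equation*}
since the logarithmic correction vanishes exactly when both cosine terms equal $1$. Because this relation is a literal equality (with unit slope) between the scalar $\textnormal{LE}_{\textnormal{F}}$ and the lattice's maximum LE, monotonicity is a one-line observation: replacing $\textnormal{LE}_{\textnormal{F}}$ by any larger value $\textnormal{LE}_{\textnormal{F}}'$ yields a new maximum LE equal to $\textnormal{LE}_{\textnormal{F}}'$, which strictly exceeds the old one.

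The only subtlety I would explicitly check is that the optimizing index pair remains $(r,l)=(0,0)$ after perturbing $\textnormal{LE}_{\textnormal{F}}$, i.e., that no other mode overtakes the synchronized one. This is automatic from Eq.~(\ref{Eq:caculatele}): for $\varepsilon\in(0,1)$ the quantity $1-\varepsilon+\tfrac{\varepsilon}{2}(\cos\tfrac{2\pi r}{R}+\cos\tfrac{2\pi l}{L})$ lies in $[1-2\varepsilon,\,1]$, so its logarithm is non-positive and vanishes only at $(0,0)$. Since the additive perturbation $\textnormal{LE}_{\textnormal{F}}$ is common to every $(r,l)$, shifting it does not change which index maximizes the sum. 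There is therefore no hard step in this proof; the entire argument amounts to citing Theorem~\ref{MLEthero} and observing that shifting a maximum over a fixed index set by a common constant shifts the maximum by the same amount.
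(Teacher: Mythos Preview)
Your proposal is correct and follows the same route as the paper: the corollary is an immediate consequence of Theorem~\ref{MLEthero}, which identifies $\textnormal{LE}_{\max}=\textnormal{LE}_{\textnormal{F}}$. The paper in fact states the corollary without any separate proof, so your added verification that $(r,l)=(0,0)$ is always the maximizing index (via $\bigl|1-\varepsilon+\tfrac{\varepsilon}{2}(\cos\tfrac{2\pi r}{R}+\cos\tfrac{2\pi l}{L})\bigr|\le 1$ with equality only at $(0,0)$) is more explicit than what appears in the paper and actually plugs a small gap left in the proof of Theorem~\ref{MLEthero} itself.
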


\subsection{Pseudo-Randomness Extraction From the $2$D CML}
\label{subsec:randomness}
According to the above analyses, with appropriate selection of the local chaotic map $\textnormal{F}$, the maximum Lyapunov exponent of the $2$D CML will be positive and the orbits of the whole system will only run in fully developed chaotic state.

Given the initial conditionals of the $2$D CML and the system parameters of $\textnormal{F}$, if any, the distribution of the orbits is governed by a PRD\footnote{This kind of chaotic system is referred to as a $2$D CML instance.}. 
Without loss of generality, assume that this PRD is not uniform. This is because only few local chaotic maps are known to have a uniform density, for example the tent map \cite{heidel1990existence}. And, even if a uniform local map is used, the overall density is likely to be uneven after coupling the local maps together in the structure of $2$D CML.
Similarly to sampling a random variable from a true random distribution, the way of taking a sample from this PRD is to iterate the $2$D CML. 

The following discussion shows how to squeeze uniform bits from the non-uniform PRD that governs the $2$D CML's orbits. This discussion starts with a general result that holds for both pseudo-random distribution and true random distribution.
\begin{Theorem}
\label{theorem:uniformtheorem}
For a random (or pseudo-random) distribution in $[0,1]$, assume that the density function has bounded first-order derivative.
For any sample $x=0.w_1w_2\cdots w_{z}$ ($w_i \in \{0,1\}$ and $i \in [1, z]$) from this distribution, one has
\begin{equation*}
    \lim_{z \to \infty} 
    P(w_{z}=0) = \lim_{z \to \infty}  P(w_{z}=1).
\end{equation*}
\end{Theorem}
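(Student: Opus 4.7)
The plan is to translate the statement about the $z$-th bit into a statement about the probability mass the distribution places on a union of dyadic intervals, and then show that the smoothness of the density forces the mass on the "even" intervals to asymptotically match the mass on the "odd" ones.

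First, I would observe that the bit $w_z$ in the binary expansion $x=0.w_1w_2\cdots w_z \cdots$ is precisely the parity of $\lfloor 2^{z}x\rfloor$. Consequently, if $f$ denotes the density of the distribution on $[0,1]$, then
\begin{equation*}
P(w_z = 0) = \sum_{k=0}^{2^{z-1}-1} \int_{2k/2^{z}}^{(2k+1)/2^{z}} f(x)\,dx,
\qquad
P(w_z = 1) = \sum_{k=0}^{2^{z-1}-1} \int_{(2k+1)/2^{z}}^{(2k+2)/2^{z}} f(x)\,dx.
\end{equation*}
Since these two probabilities sum to $1$, it suffices to show that their difference tends to $0$; the two limits will then automatically coincide at $1/2$.

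Next, I would pair consecutive dyadic cells. Writing $I_k = \int_{k/2^z}^{(k+1)/2^z} f(x)\,dx$, the difference becomes
\begin{equation*}
P(w_z=0) - P(w_z=1) = \sum_{k=0}^{2^{z-1}-1} \bigl(I_{2k} - I_{2k+1}\bigr) = \sum_{k=0}^{2^{z-1}-1} \int_{0}^{2^{-z}} \bigl[f(2k\cdot 2^{-z} + t) - f((2k+1)\cdot 2^{-z} + t)\bigr]\,dt.
\end{equation*}
Here is where the bounded-derivative hypothesis enters: if $|f'|\le M$ on $[0,1]$, then the mean value theorem gives $|f(2k\cdot 2^{-z}+t) - f((2k+1)\cdot 2^{-z}+t)| \le M\cdot 2^{-z}$ for every $t$. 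Integrating over $t\in[0,2^{-z}]$ yields $|I_{2k}-I_{2k+1}| \le M\cdot 2^{-2z}$, and summing over the $2^{z-1}$ pairs produces the uniform bound
\begin{equation*}
\bigl|P(w_z=0) - P(w_z=1)\bigr| \le 2^{z-1}\cdot M\cdot 2^{-2z} = \tfrac{M}{2}\cdot 2^{-z}.
\end{equation*}

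This bound tends to $0$ as $z\to\infty$, so the two probabilities share a common limit, which must equal $1/2$ because they always sum to $1$. I do not foresee a major obstacle: the step that requires the most care is simply justifying the identification of $w_z$ with a parity of dyadic indices (so that null sets of the form of dyadic rationals need not concern us since they have measure zero under an absolutely continuous distribution), and then cleanly applying the mean value theorem inside the paired integrals. As a side remark, the argument also yields an explicit rate, which later will be useful in Section~\ref{subsec:randomness} to quantify the deviation of the squeezed bits under fixed-point arithmetic of precision $z$.
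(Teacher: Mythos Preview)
Your proposal is correct and follows essentially the same approach as the paper: both decompose $P(w_z=0)$ and $P(w_z=1)$ into sums of integrals over alternating dyadic subintervals of width $2^{-z}$, pair adjacent cells, and invoke the bounded-derivative hypothesis via the mean value theorem to obtain an $O(2^{-z})$ bound on $|P(w_z=0)-P(w_z=1)|$. The only cosmetic difference is that the paper applies the mean value theorem twice (first to reduce each integral to a point value of the density, then to the difference of those point values), whereas you apply it once directly to the integrand difference, yielding a constant that is a factor of two sharper.
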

\begin{proof}
See the appendix.
\end{proof}

\begin{figure}[ht]
	\centering
	\includegraphics[scale=0.45]{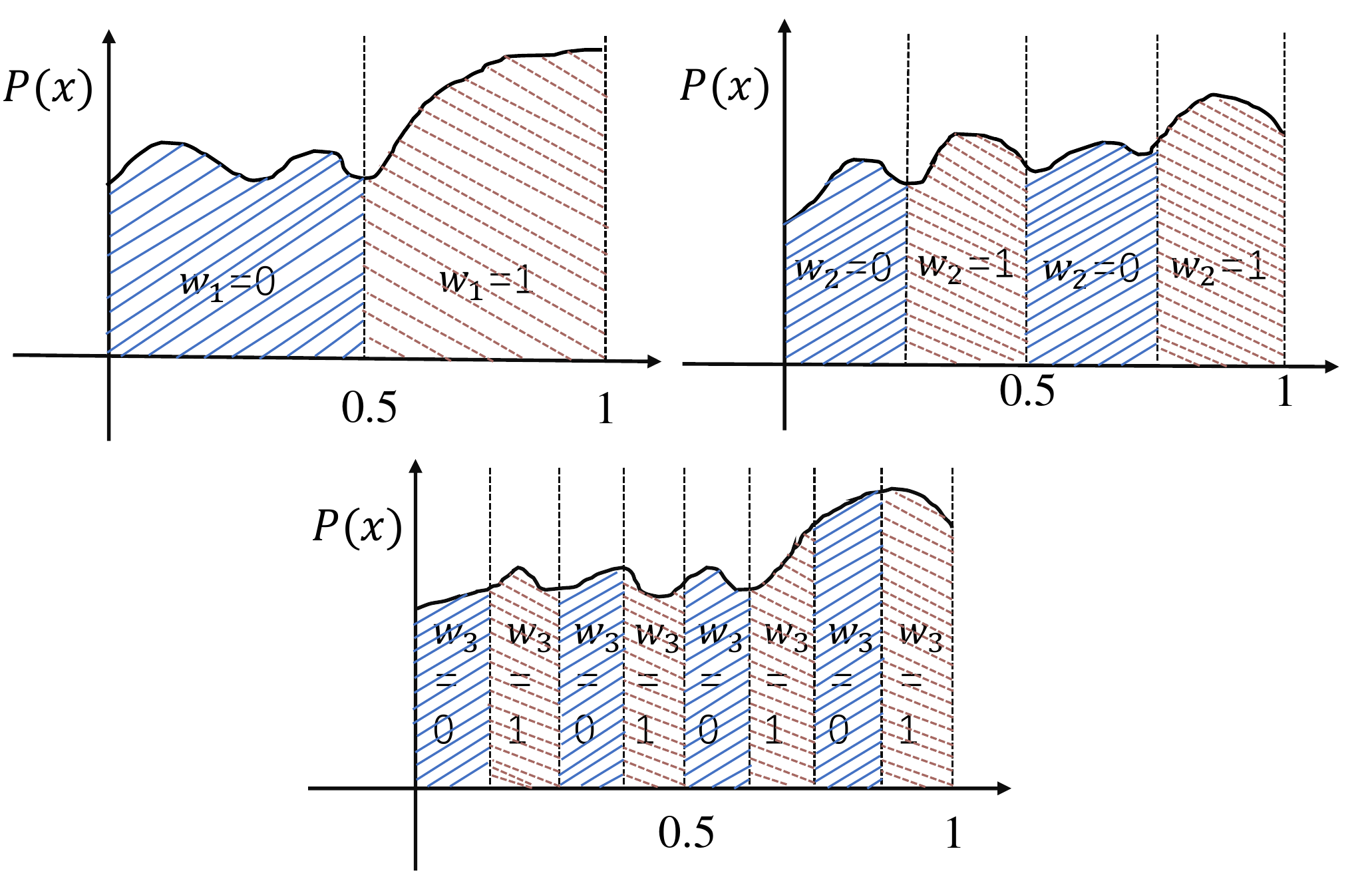}
	\caption{Illustration of Theorem~\ref{theorem:uniformtheorem}.}
	\label{Fig:uniformBit}
\end{figure}
\noindent \textbf{Remark}: This theorem can be intuitively understood from the examples depicted in Fig.~\ref{Fig:uniformBit}. For any (pseudo) random sample $x \in [0,1]$, if there is only $1$-bit ($z=1$) used to represent this number, then the difference of the probabilities of $P(w_1=0)$ and $P(w_1=1)$ equals the difference of the areas associated with the two slices in blue and red (as shown in the top-left of Fig.~\ref{Fig:uniformBit}). As $z$ goes larger (for example, $z=2$ and $z=3$ in Fig.~\ref{Fig:uniformBit}), the slices become smaller and their associated areas become similar. Thus, from the appendix, it is easy to see that $|P(w_z=0)- P(w_z=1)| = O(2^{-z})$.

\noindent \textbf{Remark}: Although the theorem is proved by assuming that the density function has a bounded first-order derivative, which implies continuity, it can be easily extended to a density with countably many points of discontinuity, which is always the case of finite precision representations of real chaotic orbits. 

Theorem~\ref{theorem:uniformtheorem} states the fact that, under fixed-point arithmetic, the farther away from the radix point, the more uniform the bits tend to be. 
For a given PRD determined by a $2$D CML instance, a naive way of using Theorem~\ref{theorem:uniformtheorem} to extract (pseudo) random bits is to take the rightmost bit of a chaotic orbit under finite precision representation. 
If the precision is $z$ bit, a single uniform bit with bias bounded by $O(2^{-z})$ can be obtained from one orbit. 
However, as discussed in Sec.~\ref{Sec:intro}, the efficiency of this method is too low to meet practical requirements. Nevertheless, as will be seen in the following, this drawback can be avoided by making use of more CML instances and Property~\ref{pro:uniformbias} given below. 

For two different $2$D CML instances, assume that their associated initial conditions and system parameters are selected independently. Consequently, the associated PRDs induced by these two instances are independent of each other, and so do the pseudo-random samples drawn from the two PRDs. For any sample $x=0.x_1x_2\cdots x_{z}$ drawn from the first PRD and $y=0.y_1 y_2\cdots y_{z}$ drawn from the second PRD, by letting $w = x_i + y_j \pmod{2}$, one can conclude that $w$ is more uniform than both $x_i$ and $y_j$ ($i, j \in [1, z]$). Specifically, the bias of $w$ is bounded by $O(2^{-(i+j)})$, as shown below.

\begin{Property}
\label{pro:uniformbias}
Let $x_i$ and $y_j$ be independent binary random bits with bounded bias $O(2^{-i})$ and $O(2^{-j})$, respectively. Then, $w= x_i+y_j \pmod{2}$ satisfies
\begin{IEEEeqnarray}{rCL}
|P(w=0)- P(w=1)| = O(2^{-(i+j)}). \nonumber
\end{IEEEeqnarray}
\end{Property}
\begin{proof}
See the appendix.
\end{proof}

Extending this property further, the following corollary can be easily obtained. From this corollary, it is easy to see that, with more independent PRDs, one can obtain bits with arbitrarily small bias. 

\begin{Corollary}
\label{Corollary:BetterUniform}
Let $w_{1}, w_{2},\cdots,w_{c}$ be $c$ mutually independent binary random variables, and each with bounded bias $O(2^{-k_i})~(i\in [1, c])$, and set $w = \sum\nolimits_{i=1}^c w_{i} \pmod{2}$. Then, one has
\begin{IEEEeqnarray}{rCL}
|P(w=0)- P(w=1)| = O(2^{-\left(\sum\nolimits_{i=1}^c k_{i}\right)} ). \nonumber
\end{IEEEeqnarray}   
\end{Corollary}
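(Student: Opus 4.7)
The plan is to prove Corollary~\ref{Corollary:BetterUniform} by induction on $c$, using Property~\ref{pro:uniformbias} as the engine of the inductive step. The base case $c=1$ is immediate from the hypothesis, since $w = w_1$ then has bias $O(2^{-k_1})$ by assumption. (Alternatively, one can take $c=2$ as the base case, which is exactly the statement of Property~\ref{pro:uniformbias}.)

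For the inductive step, assume the statement holds for any $c-1$ mutually independent binary random bits, and consider $c$ such bits $w_1,\ldots,w_c$. Define the partial XOR $w' = \sum_{i=1}^{c-1} w_i \pmod{2}$. First I would invoke the induction hypothesis on $w_1,\ldots,w_{c-1}$ to conclude that $w'$ is a binary random variable with bias $|P(w'=0)-P(w'=1)| = O\!\left(2^{-\sum_{i=1}^{c-1} k_i}\right)$. Then I would observe that $w = w' + w_c \pmod{2}$ and apply Property~\ref{pro:uniformbias} to the pair $(w', w_c)$, which yields the desired bias bound $O\!\left(2^{-\left(\sum_{i=1}^{c-1}k_i + k_c\right)}\right) = O\!\left(2^{-\sum_{i=1}^{c}k_i}\right)$, closing the induction.

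The only nontrivial point is justifying that Property~\ref{pro:uniformbias} can legitimately be applied at each step, i.e., that $w'$ is independent of $w_c$. This follows from the mutual independence hypothesis: since $w'$ is a (deterministic) function of $w_1,\ldots,w_{c-1}$, and $w_c$ is independent of the joint distribution of $(w_1,\ldots,w_{c-1})$, the variables $w'$ and $w_c$ are independent. I would make this remark explicit in the write-up so that the hypothesis of Property~\ref{pro:uniformbias} is clearly satisfied at each level of the induction. Beyond this independence check, the rest of the argument is purely mechanical bookkeeping of exponents, and no delicate estimate is required.
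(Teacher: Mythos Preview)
Your proposal is correct and is exactly the intended argument: the paper does not spell out a proof of Corollary~\ref{Corollary:BetterUniform} but simply states that it follows by ``extending this property further,'' i.e., by iterating Property~\ref{pro:uniformbias}, which is precisely your induction on $c$. Your explicit check that $w'=\sum_{i=1}^{c-1}w_i\pmod 2$ is independent of $w_c$ (as a measurable function of $(w_1,\ldots,w_{c-1})$, which is jointly independent of $w_c$) is the only point requiring care, and you handle it correctly.
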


Property~\ref{pro:uniformbias} holds only when the samples $x$ and $y$ are drawn from independent (pseudo) random distributions, which requires the two CML instances be selected independently. By making use of the result about LE in Sec.~\ref{subsec:LEAna}, this assumption can be further relaxed. If the initial conditions or the system parameters of the CML instances are correlated but they both have large positive LE, then the correlation between their orbits will be dispersed fast after a few iterations. This transition period can be determined by applying the independence test from  Sec.~\ref{SubSec:test} to the orbits. In this way, the choice of the two $2$D CML instances can be made arbitrary if both CMLs have large positive $\textnormal{LEs}$ and the orbits in transition are discarded.

Based on the discussion above, one can improve the efficiency of pseudo-random bits generation for $z$ times by using two $2$D CML instances, while keeping the bias be $O(2^{-z})$. The end-to-end extraction Algorithm \ref{Alg:extraction} summarizes the details. 
\begin{algorithm}
\label{Alg:extraction}
\caption{Extraction Algorithm $\mathbf{E}$}
\KwIn{Two sets of initial conditions and system parameters 2D CML.}
\KwOut{Pseudo-random bits.}
\DontPrintSemicolon
%
\SetKwFunction{FModAdd}{\textnormal{ModAdd}}
\SetKwFunction{FMain}{\textnormal{Main}}
%
\SetKwProg{Fn}{Function}{:}{}
\Fn{\FModAdd{$x$, $y$}}{
    $x = 0.x_1x_2 \cdots x_z$\;
    $y = 0.y_1y_2 \cdots y_z$\;
    $w = w_1w_2 \cdots w_z$\;
    $w_i = x_i + y_{z+1-i} \pmod 2$\;
    \KwRet $w$\;
}
\SetKwProg{Fn}{Function}{:}{}
\Fn{\FMain}{
    \noindent \textbf{Step 1} Run the two CML instances and collect their corresponding orbits $\{x^i\}_{i=0}^K$ and $\{y^i\}_{i=0}^K$, and update the states\; 
    \noindent \textbf{Step 2} \modified{Perform the independence test on the collected orbits}\;
    \eIf{\textnormal{test passed}}{
      \While{\textnormal{necessary}}
      {
          Run the two instances to get ($x^0$, $y^0$)\; 
        ModAdd($x^0$, $y^0$)\; 
        }
    }
    {
        go back to \textbf{Step~1}\;
    }
\KwRet 0\;
}
\end{algorithm}

\section{Experimental Analysis}
\label{sec:SimTest}
In this section, numeric tests are carried out on different $2$D CML systems to assess the applicability and performance of the theoretic results developed in Sec.~\ref{sec:method}. Some popular chaotic systems are used as the local map $\textnormal{F}$ of Eq.~(\ref{Eq:2dCML}):
\begin{itemize}
    \item The Logistic map
    \begin{IEEEeqnarray*}{rCl}
    x_{n+1}  = \mu x_n (1-x_n), 
    \end{IEEEeqnarray*}
    where $\mu \in (0,4]$ and $x_n \in (0,1)$.
    
    \item The Tent map 
     \begin{IEEEeqnarray*}{rCl}
        x_{n+1}  = \left\{ \, 
        \begin{IEEEeqnarraybox}[][c]{l?s}
            \IEEEstrut
            \mu x_n,         & if $x_n\in (0,\modified{0.5})$, \\
            \mu (1-x_n),    & if $x_n\in [0.5,1)$,
            \IEEEstrut
        \end{IEEEeqnarraybox}
        \right.
    \end{IEEEeqnarray*}
    where $\mu \in (0,2]$ is the system parameter.
    
    \item The piecewise Logistic map (PLM)
      \begin{IEEEeqnarray*}{rCl}
        x_{n+1}  = \left\{ \, 
        \begin{IEEEeqnarraybox}[][c]{l?s}
            \IEEEstrut
            & $\mu N^2(x_n-\frac{i-1}{N})(\frac{i}{N}-x_n)$ , \\
                  &~~~~~~ if $\frac{i-1}{N}<x_n<\modified{\frac{i}{N}}$, \\
            & $1- \mu N^2 (x_n-\frac{i-1}{N})(\frac{i+1}{N}-x_n)$,\\
                  &~~~~~~ if $\frac{i}{N}<x_n<\frac{i+1}{N}$,
            \IEEEstrut
        \end{IEEEeqnarraybox}
        \right.
    \end{IEEEeqnarray*}
    where $\mu \in (0,4]$ is the system parameter and $N$ is the number of segments. 
\end{itemize}
It is worth mentioning that, other than the basic ones listed here, chaotic maps with more complex behavior, such as the ones suggested in \cite{hua2017designing,hua2015dynamic}, can also be  employed as the local map of the $2$D CML system. For simplicity, refer the resultant chaotic systems to as L-CML, T-CML, and PLM-CML, respectively.

\subsection{Lyapunov Exponents and Bifurcation Diagrams}
In this section, it is verified that the theoretic LE formula ~(\ref{Eq:caculatele}) of the $2$D CML is correct. Then, the associated bifurcation diagram is used to demonstrate that the orbits of the CML instance indeed run in full chaotic state with appropriate parameters. 

The reliable Wolf's scheme \cite{wolf1985determining} is used to calculate the LEs of the $2$D CML, and the result is plotted in Fig.~\ref{Fig:plotLEs} with $\mu = 4, 2, 4$ for L-CML, T-CML and PLM-CML, respectively. Here, the other parameters in Eq.~(\ref{Eq:2dCML}) are set to $\varepsilon =0.1$, $R= L = 8$, and the number of segments of PLM $N = 64$. It is clear that the theoretic scheme given by Eq.~(\ref{Eq:caculatele}) perfectly aligns with the numeric method.
\begin{figure*}[ht]
	\centering
	\centering
	\subfigure[LEs of L-CML]
	{
		\includegraphics[scale=0.32]{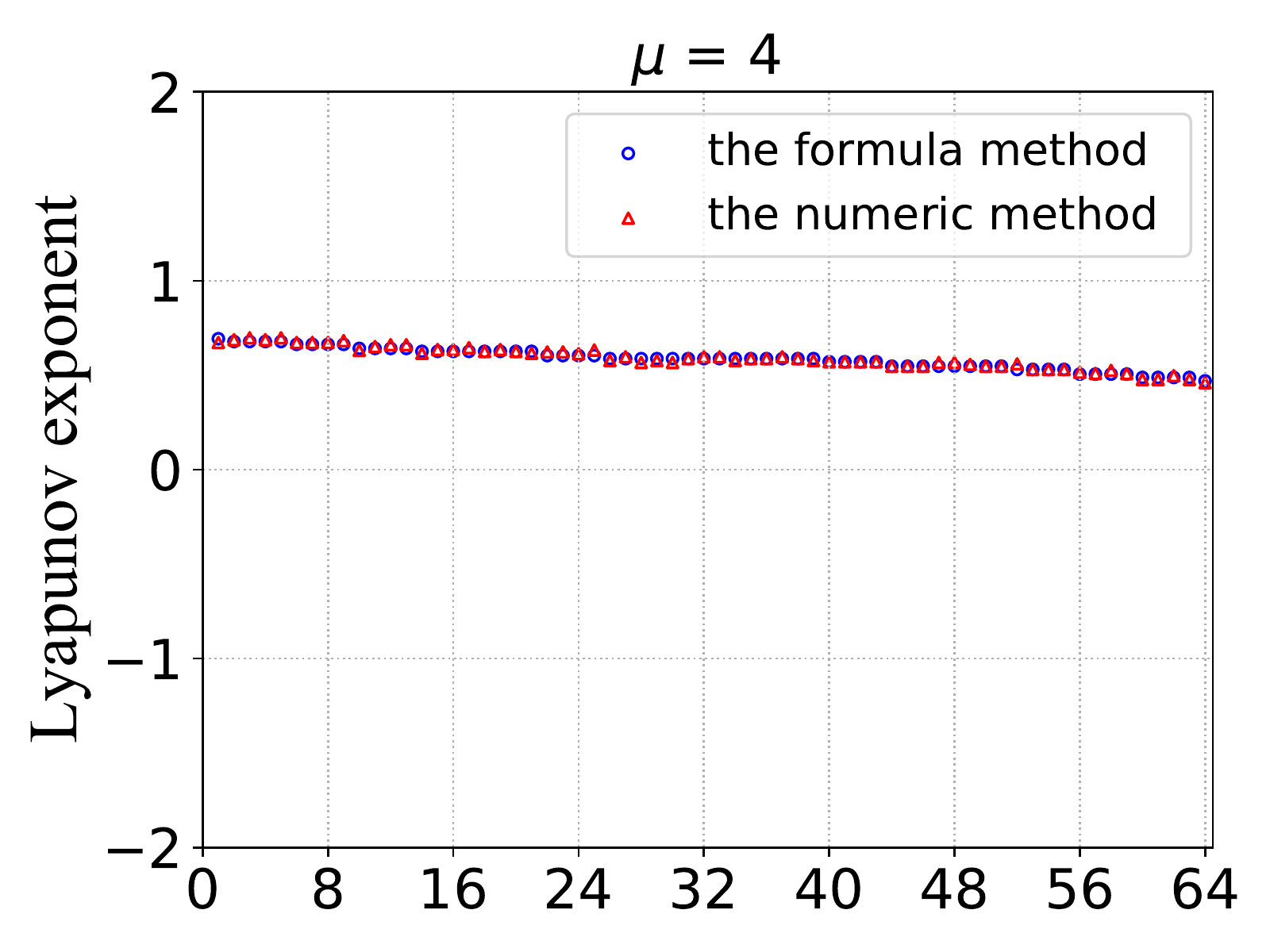}
	} 
	\subfigure[LEs of T-CML]
	{
		\includegraphics[scale=0.32]{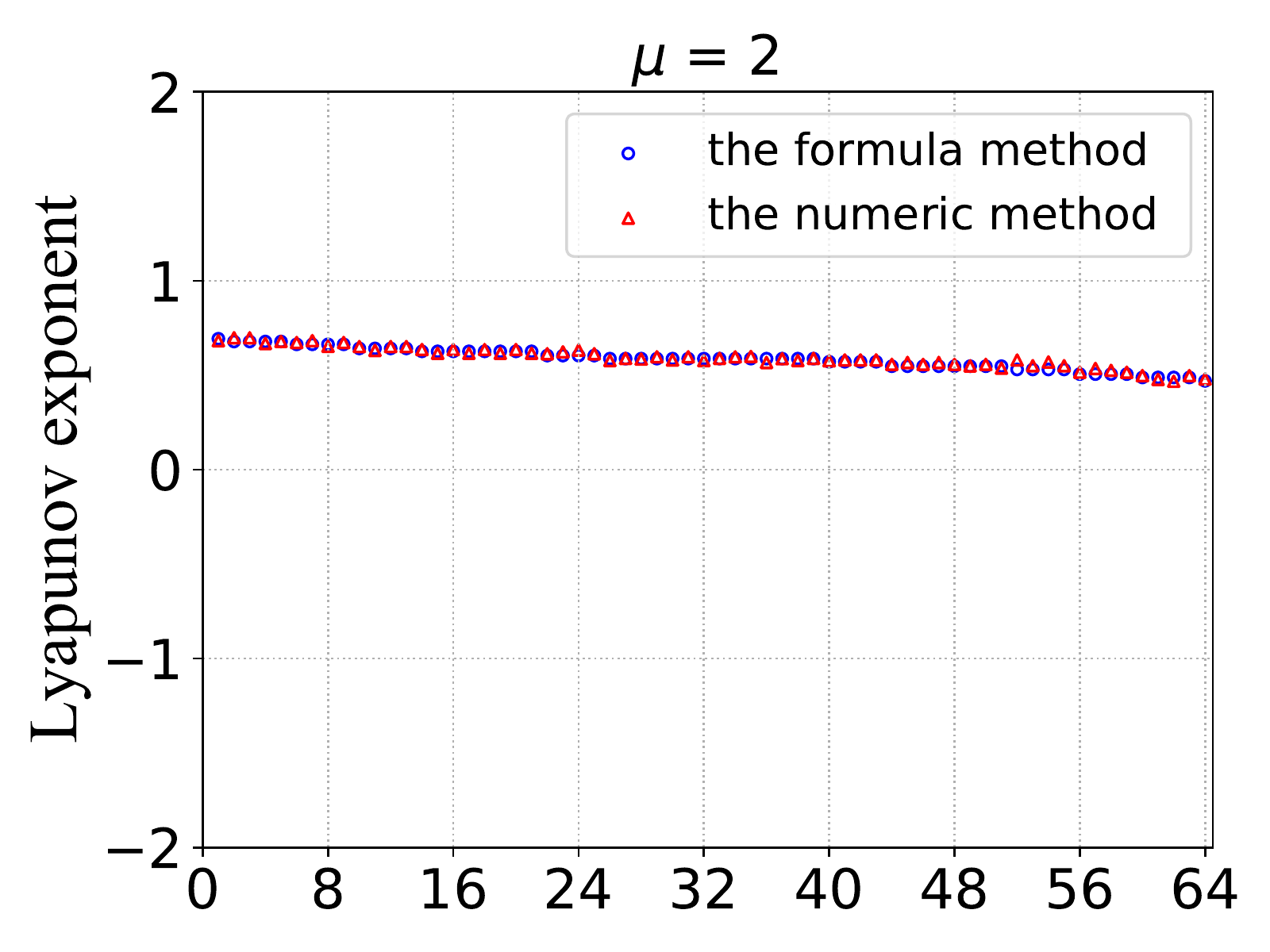}
	}
	\subfigure[LEs of PLM-CML]
	{
		\includegraphics[scale=0.32]{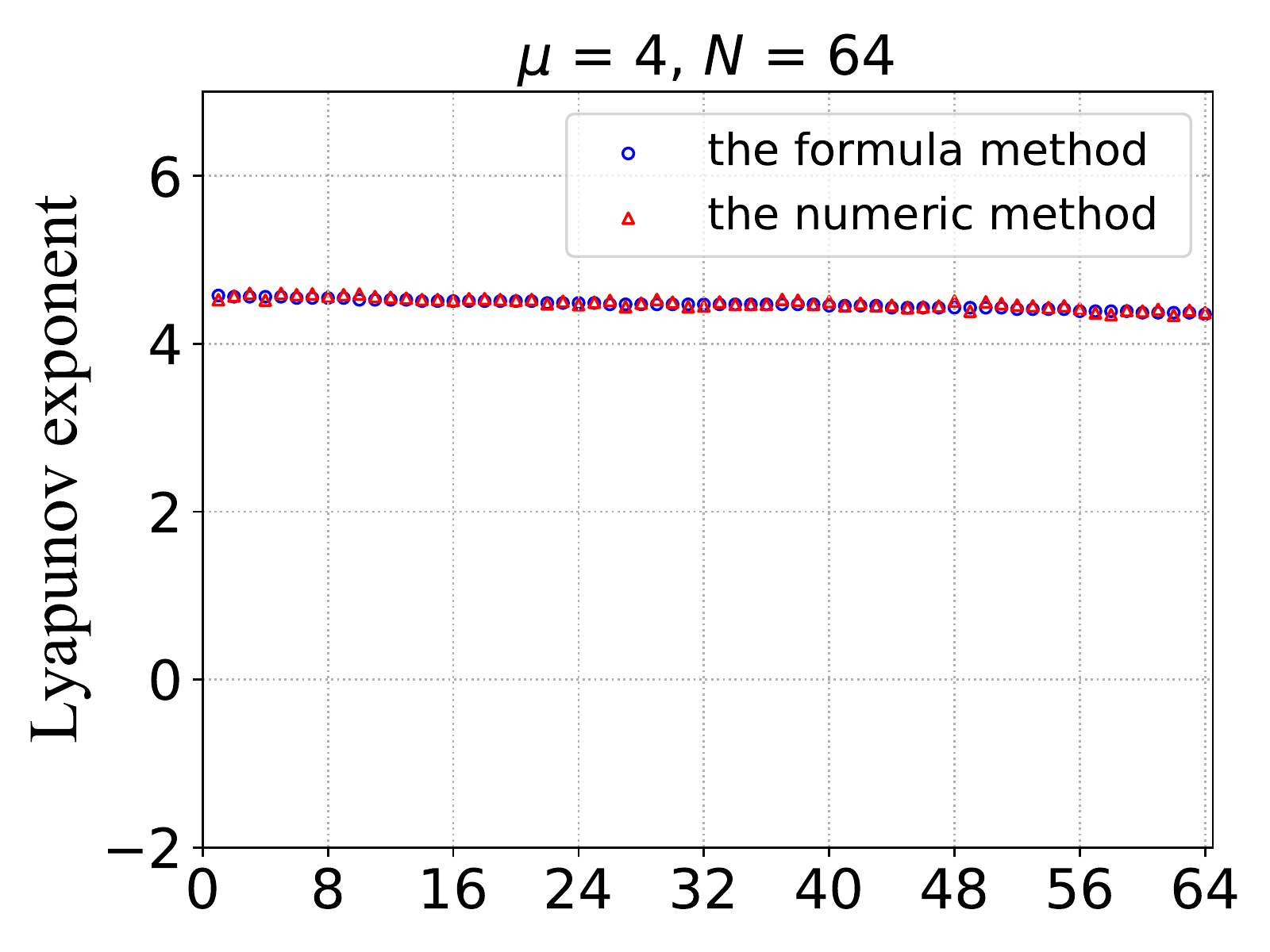}
	}
	\caption{LEs obtained from the numeric method in \cite{wolf1985determining} and theoretic result in Eq.~(\ref{Eq:2dCML}).}
	\label{Fig:plotLEs}
\end{figure*}

To demonstrate of chaos, the bifurcation diagram is used to plot output orbits of the system with respect to the change of the parameters. As shown in Theorem~\ref{MLEthero}, the maximum LE of a $2$D CML is solely determined by the local map $\textnormal{LE}_{\textnormal{F}}$. And, the orbits extracted from the first node (i.e., $r=l=0$ in Eq.~(\ref{Eq:2dCML})) achieves the maximum LE. 
By varying the parameter $\mu$, one can exact the orbits of the first node of L-CML, T-CML, and PLM-CML, respectively, and their bifurcation diagrams are shown in Fig.~\ref{Fig:1stBifurcation}. It is clear that the orbits of CML run in the chaotic state with an appropriate choice of the parameter of the local map. 
\begin{figure*}[ht]
	\centering
	\centering
	\subfigure[]
	{
		\includegraphics[scale=0.32]{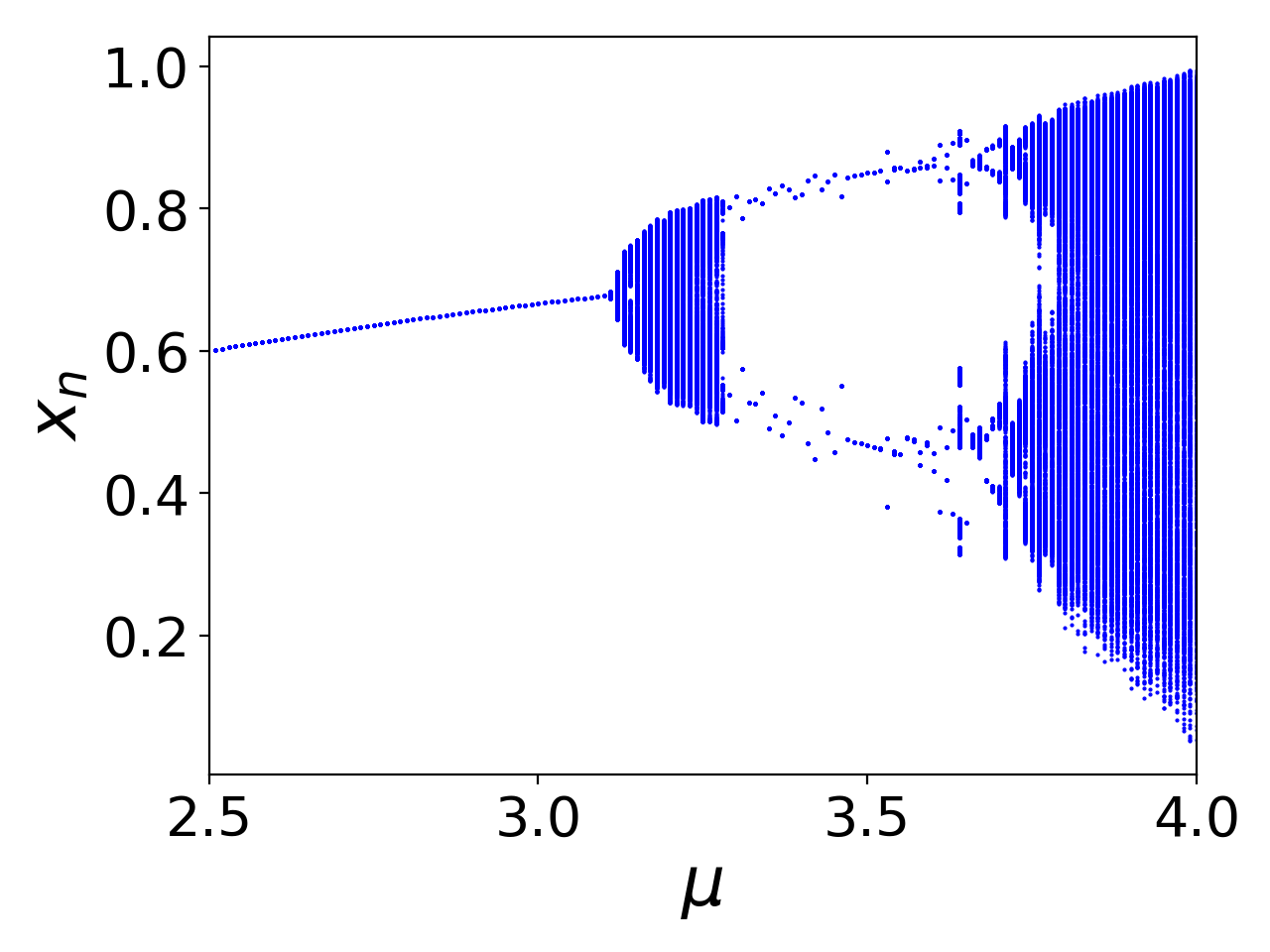}
	} 
	\subfigure[]
	{
		\includegraphics[scale=0.32]{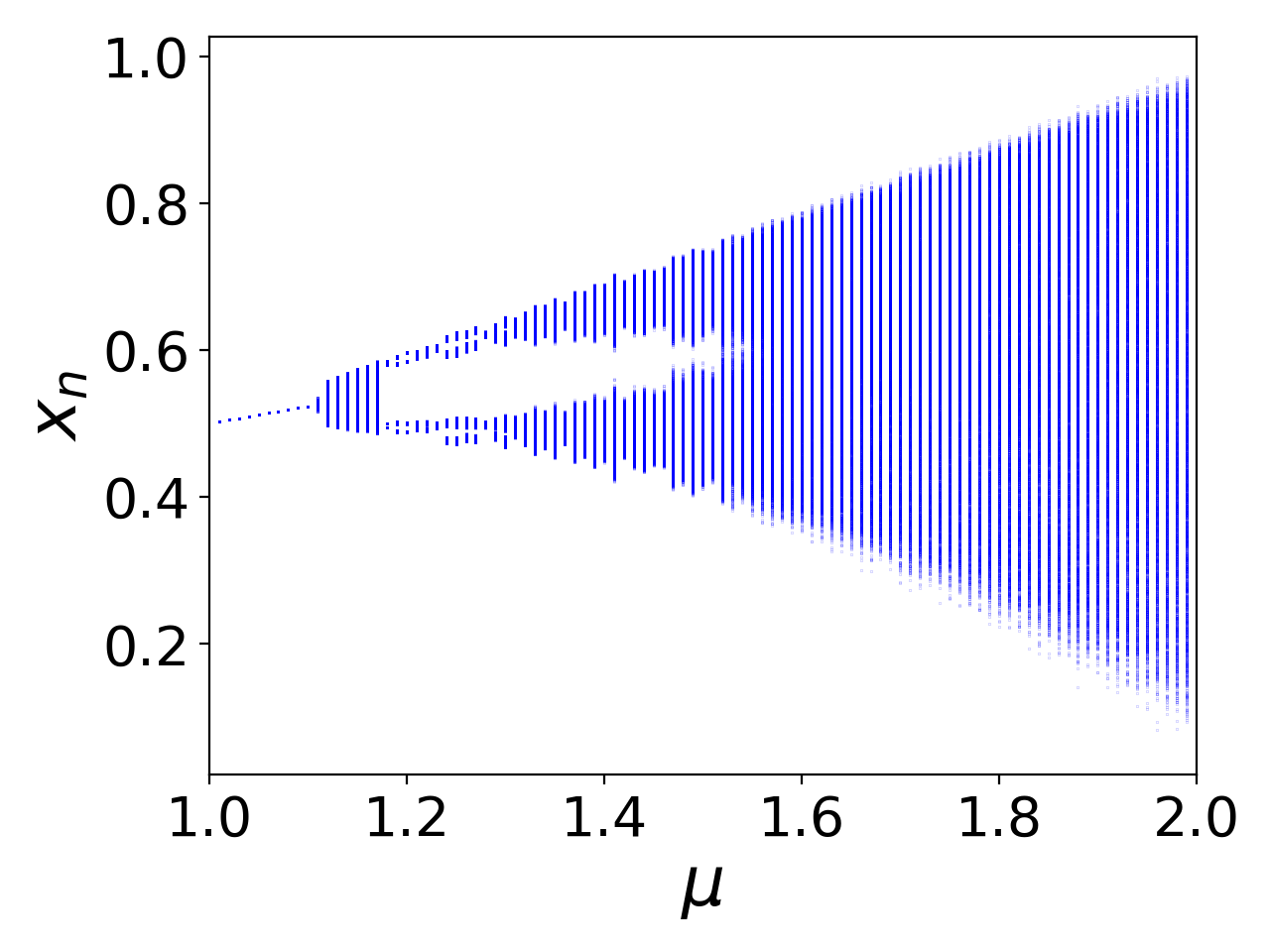}
	}
	\subfigure[]
	{
		\includegraphics[scale=0.32]{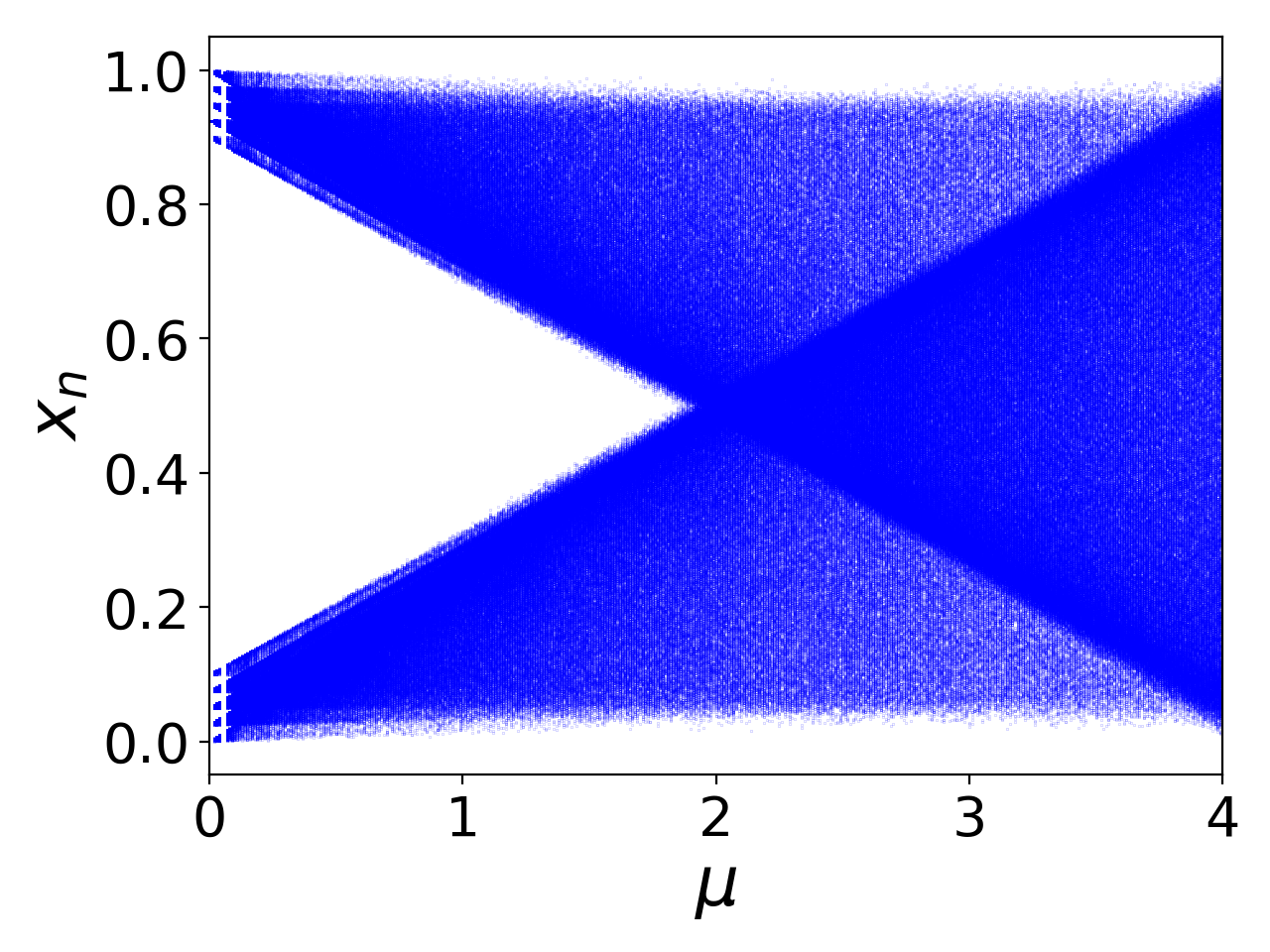}
	}
	\caption{Bifurcation diagram corresponding to the first node of L-CML, T-CML and PLM-CML, respectively.}
	\label{Fig:1stBifurcation}
\end{figure*}
Once again, it should be pointed out that, with more complex  chaotic systems, the $\textnormal{LE}_{\textnormal{F}}$ can be made larger \cite{hua2017designing,hua2015dynamic}, and the resulting chaotic performance will be better. This may bring up more benefits like a higher extraction rate of the pseudo-random numbers, but such possible improvement is left for future study. 
Moreover, other metrics, such as Kolmogorov entropy, fractal power spectra and correlation dimension, can be used to measure the performance of chaos, which likewise not the focus of this study.

\subsection{Randomness Tests}
To assess the results presented in Sec.~\ref{subsec:randomness}, the pseudo-random distributions introduced by L-CML, T-CML and PLM-CML are first plotted. Under the same parameter settings, the resulting distributions are shown in Fig.~\ref{Fig:distribution}. 
%
%
\modified{Note that the distribution of the Tent map is known to be uniform when $\mu =2$ \cite{heidel1990existence}, but the distribution of T-CML, like that of L-CML and PLM-CML, is not uniform. This finding supports the arguments in Sec.~\ref{Sec:intro}, suggesting the need for a theoretically sound approach for squeezing uniform random numbers from chaotic orbits.}
\begin{figure*}[ht]
	\centering
	\centering
	\subfigure[]
	{
		\includegraphics[scale=0.32]{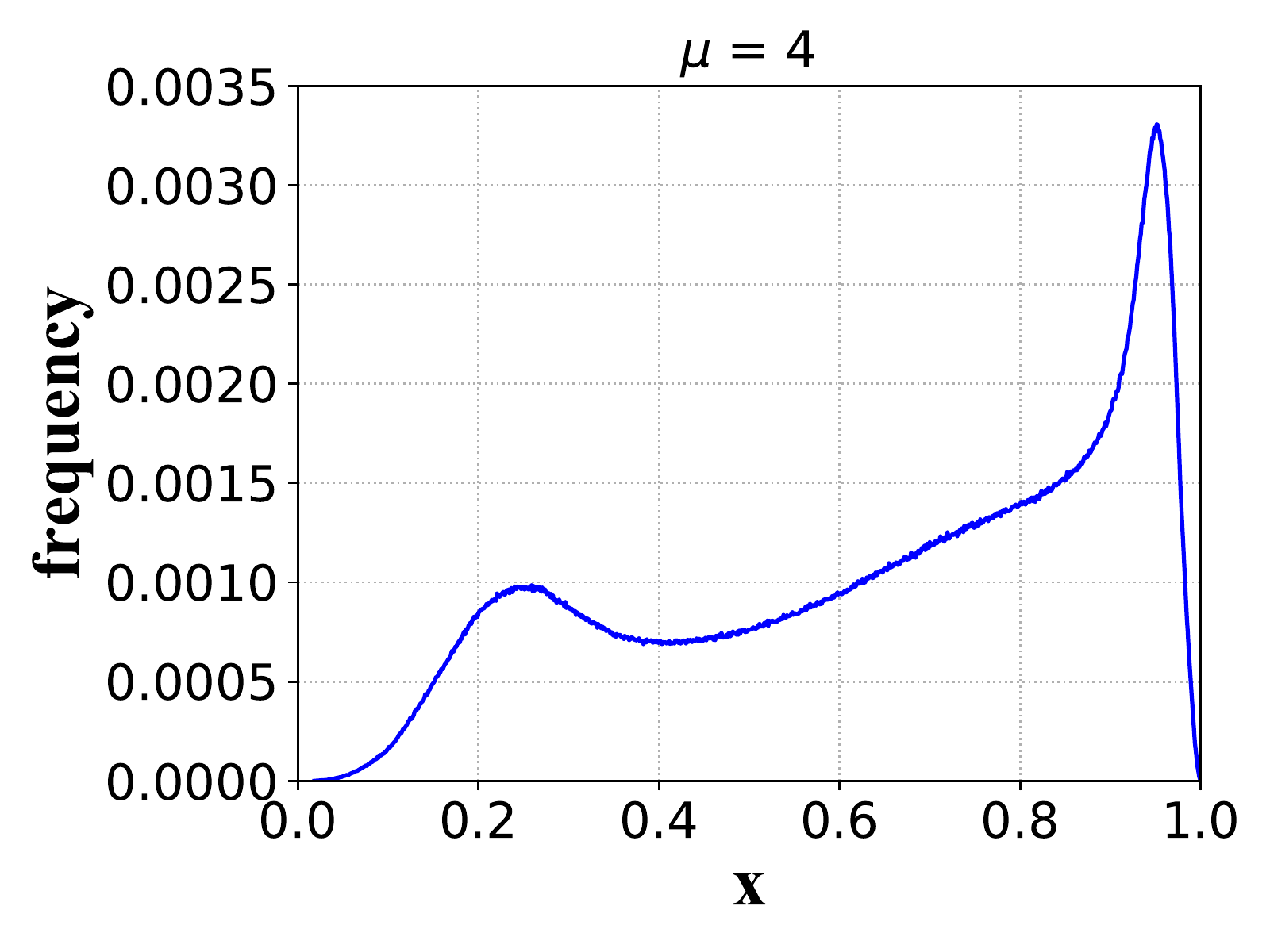}
	} 
	\subfigure[]
	{
		\includegraphics[scale=0.32]{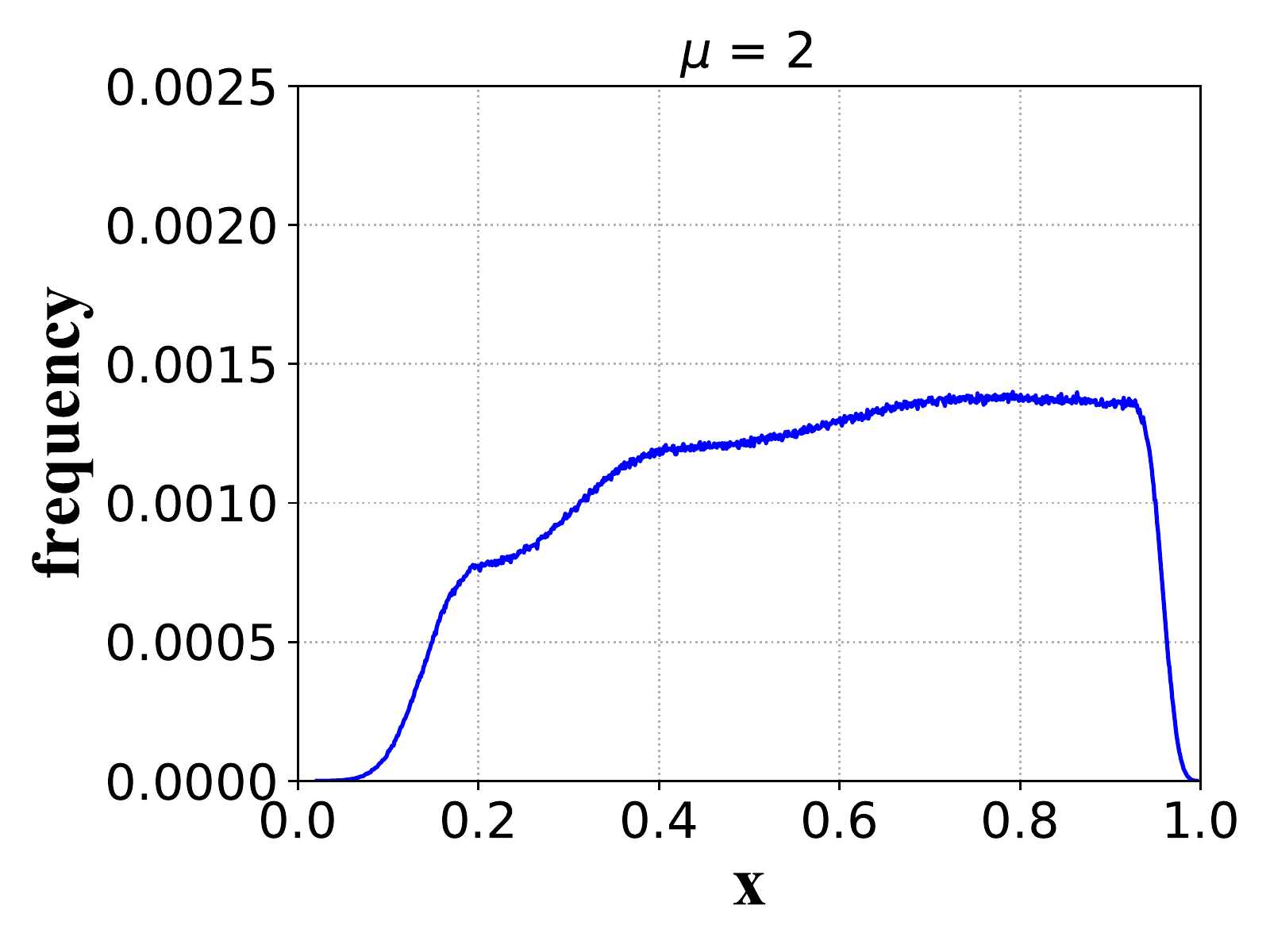}
	}
	\subfigure[]
	{
		\includegraphics[scale=0.32]{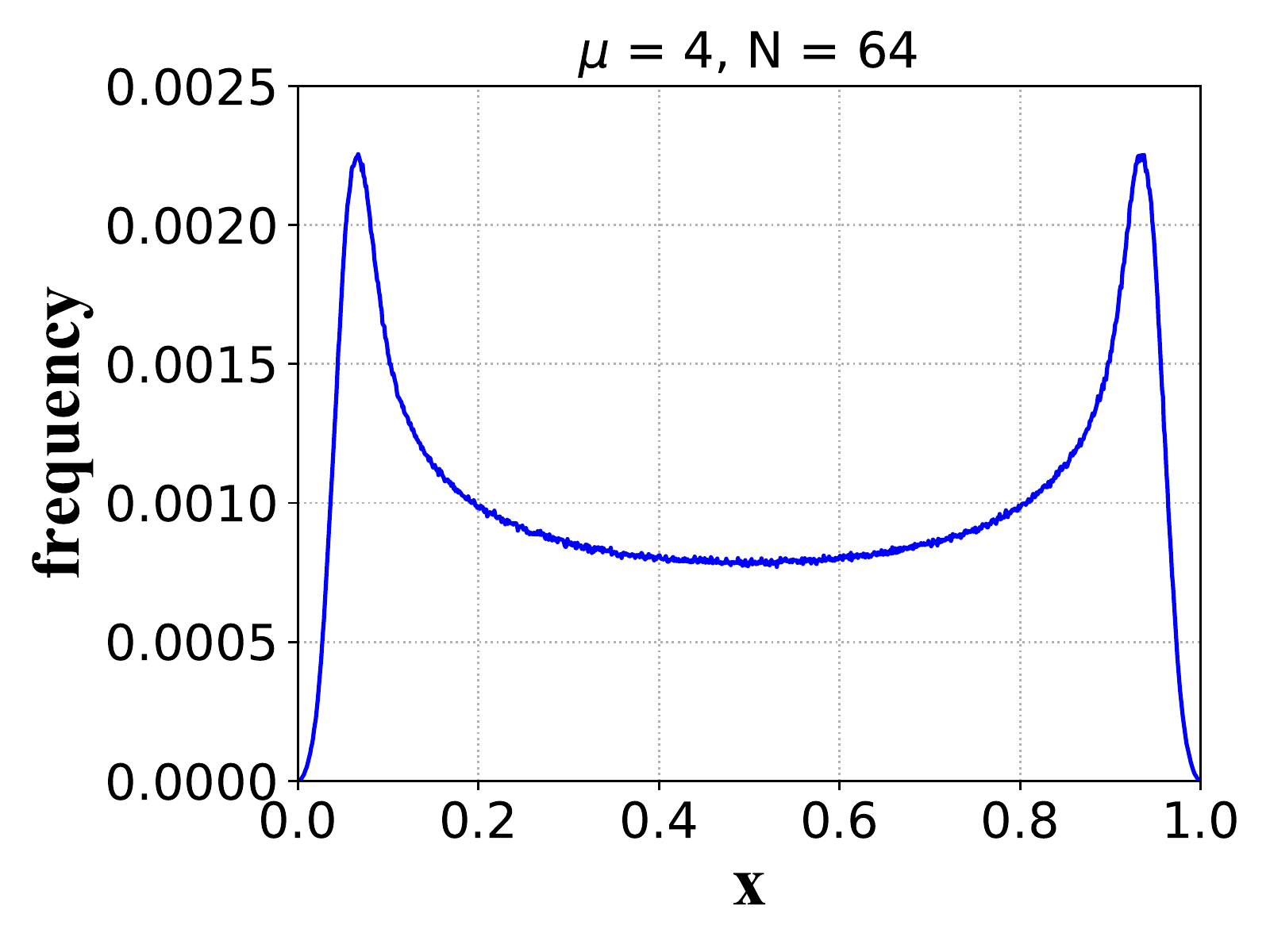}
	}
	\caption{Orbits distribution of L-CML, T-CML and PLM-CML, respectively.}
	\label{Fig:distribution}
\end{figure*}

Next, statistical tests are performed on binary outputs of the extraction algorithm $\mathbf{E}$. The parameter settings are the same as above, while the precision $z$ is set to $64$ and the parameter $K$ for the independence test is set to $10^3$.
With $3$ local maps, there are $3\times 3 =9$ pairs of $2$D CML instances as the input of $\mathbf{E}$. 
For each pair, one of the initial conditions of the CML instance is set to random and the other is obtained by perturbing the first one with difference up to $10^{-3}$. \modified{Note that the initial conditions are strongly correlated, so this is the worst-case study for the extraction algorithm $\mathbf{E}$. 
But, as argued in Sec.~\ref{subsec:randomness}, positive LE, on the other hand, can disperse the correlation within initial conditions. The resultant orbits (and thus PRD) are pseudo-independent of each other after a transition period.}
The binary outputs are collected and tested against NIST SP800-22 \cite{bassham2010sp} and TestU01 \cite{l2007testu01}.

\noindent \textbf{NIST SP800-22}:
{This is a statistical randomness test suite for binary sequences. It is a collection of $15$ tests, each of them outputs one (or, in a few cases, more than one) $P$-value. The suite is designed to be applied over binary sequences with length $10^6$. The significance level $\alpha$, which determines the region of acceptance of the assumption that the sequence being tested is random, is suggested to be set to $0.01$. A sequence passes a test if $P$-value is not less than $\alpha$.}

{Each test is designed such that the probability for an ideal generator to fail a test is $\alpha$.
So, $1000 > 1/ \alpha$ sequences with length $10^6$ produced by the extraction algorithm $\mathbf{E}$ are considered and tested. According to \cite[Chap.~4]{bassham2010sp}, empirical test results with $P$-value $\geq \alpha$ should be examined further according to\footnote{This is called the two-level random number testing procedure \cite{pareschi2012statistical}.}: 1) the pass rate of the sub-tests; 2) the distribution of empirical $P$-values. The pass rate is approximated with a normal random variable, and under the current setting, a pass rate $< 0.98056$ is interpreted as evidence of non-randomness. The $1000$ $P$-values from each sub-test will be further tested against a chi-square goodness-of-fit test with $10$ bins, and this again produces another $P\textnormal{-value}_T$ (i.e., a $P$-value of $P$-values). If $P\textnormal{-value}_T \geq 0.0001$, the sequences are considered to be uniform.}

\begin{table}[h]
\caption{NIST 800-22 Test Results on $\mathbf{E}$ with two Correlated L-CML Instances as Input.}
\label{Tab:NISTtest}
\centering
\resizebox{0.45\textwidth}{!}{%
\begin{tabular}{|l|l|l|l|}
\hline
\multicolumn{1}{|c|}{\multirow{2}{*}{Sub-test}} & $P$-value$^\dagger$     & Pass Rate     & $P$-value$_T$   \\ \cline{2-4} 
\multicolumn{1}{|c|}{}                          & $\geq 0.01$ & $\geq 0.98056$ & $\geq 0.0001$ \\ \hline
Frequency                                       & $0.8902$    & $0.9890$      & $0.8677$      \\ \hline
Block Freq.                                     & $0.7511$      & $0.9890$        & $0.9179$        \\ \hline
Cumulative*                                     & $0.3775$      & $0.9860$        & $0.1835$        \\ \hline
Runs                                            & $0.1881$      & $0.9900$        & $0.0242$        \\ \hline
Longest Run                                     & $0.9101$      & $0.9930$        & $0.6931$        \\ \hline
Rank                                            & $0.5842$      & $0.9900$        & $0.1381$        \\ \hline
Spectral (FFT)                                  & $0.6464$      & $0.9890$        & $0.5341$        \\ \hline
Nonoverlap*                                     & $0.4994$      & $0.9895$        & $0.4702$        \\ \hline
Universal                                       & $0.8006$      & $0.9820$        & $0.0669$        \\ \hline
App. Entropy                                    & $0.4894$      & $0.9830$        & $0.4245$        \\ \hline
Random E.*                                      & $0.3198$      & $0.9893$        & $0.5463$        \\ \hline
Random E. V.*                                   & $0.3176$      & $0.9915$        & $0.4760$        \\ \hline
Serial*                                         & $0.2898$      & $0.9880$        & $0.3429$        \\ \hline
Linear Comp.                                    & $0.4376$      & $0.9870$        & $0.1581$        \\ \hline
Success Counts                                  & $15/15$       & $15/15$         & $15/15$         \\ \hline
\multicolumn{4}{l}{* These tests generates more than one $P$-values; $\dagger$ Only} \\
\multicolumn{4}{l}{one of the test results has been considered in the table.} 
\end{tabular}%
}
\end{table}

\noindent \textbf{TestU01}: Generally speaking, it is a more systematic test suite for random numbers. TestU01 works for both random number over the interval $(0,1)$ and binary sequences, and it contains six predefined batteries of tests. In this work, the batteries \textit{Rabbit}, \textit{Alphabit} and \textit{BlockAlphabit} designed for binaries are selected. The batteries \textit{Rabbit} and \textit{Alphabit} have $38$ and $17$ statistical sub-tests, and \textit{BlockAlphabit} applies the sub-tests in \textit{Alphabit} repeatedly after reordering the bits by a block length specified by the software package. This test suite is run on binary sequences with length $2^{20}$, $2^{25}$ and $2^{30}$, and a $P\textnormal{-value}$ lies outside the range $[0.001, 0.999]$ is interpreted as fail; otherwise, it passes. 

Taking two correlated L-CML instances as input, $\mathbf{E}$ produces $10^3 \times 10^6$ bits and these bits are tested against NIST 800-22.
Table~\ref{Tab:NISTtest} shows the results of each sub-test of NIST 800-22, where the two-level testing procedure discussed above is applied. Note that the $P$-value displayed here is just a typical test output.
It is \modified{obvious} that the output of the extraction algorithm $\mathbf{E}$ passes \modified{all the sub-tests contained in the NIST 800-22 test suite.} 
Similarly, taking two correlated T-CML instances as input, the TestU01 test suite is run on the outputs and the results are summarized in Table~\ref{Tab:TestU01test}. Clearly, the output of the extraction algorithm $\mathbf{E}$ also passes the TestU01 test. \modified{In addition}, \modified{all additional $8$ pairs of $2$D CML instance combination of $\mathbf{E}$ pass both NIST 800-22 and TestU01 tests, but the results are omitted here owing to space constraints.} 
The experimental results confirm that the developed theory provides a sound foundation for generating pseudo-random bits from digital chaos and the extraction algorithm $\mathbf{E}$ can be used to produce uniform bits for secure applications.

\begin{table}[ht]
\caption{TestU01 Test Results on $\mathbf{E}$ with two Correlated T-CML Instances as Input.}
\label{Tab:TestU01test}
\centering
\resizebox{0.38\textwidth}{!}{%
\begin{tabular}{|c|c|c|c|}
\hline
\multicolumn{1}{|l|}{Length} & \multicolumn{1}{l|}{\textit{Rabbit}} & \multicolumn{1}{l|}{\textit{Alphabit}} & \multicolumn{1}{l|}{\textit{BlockAlphabit}} \\ \hline
$2^{30}$                     & $38/38$                     & $17/17$                       & $17/17$                            \\ \hline
$2^{25}$                     & $38/38$                     & $17/17$                       & $17/17$                            \\ \hline
$2^{20}$                     & $38/38$                     & $17/17$                       & $17/17$                            \\ \hline
\end{tabular}%
}
\end{table}

\modified{
\subsection{Efficiency Analysis}
To further assess the performance of the proposed extraction algorithm $\mathbf{E}$, it is evaluated through comparing with other chaotic PRNGs \cite{shujun2001pseudo,hua2015dynamic,alawida2020enhanced,lv2018novel}.
It is noted that the design in \cite{shujun2001pseudo} is the only previously known method that provides theoretically guaranteed uniform randomness from chaotic systems. 
By enhancing $1$D chaotic systems, the work in \cite{hua2015dynamic} is a famous heuristic PRNG design due to its simplicity and thorough experimental evaluation. 
The methods in \cite{alawida2020enhanced} and \cite{lv2018novel} are more recent heuristic proposals based on enhancing simple chaotic systems and using CML, respectively. 
For the proposed method, L-CML is used. For PRNGs in \cite{shujun2001pseudo,hua2015dynamic,alawida2020enhanced,lv2018novel}, the same settings of the original works are used. All the algorithms are then implemented on a Laptop with the Core i7-10710U CPU and 16G RAM.}

\modified{
Table \ref{Tab:Comparison} lists the running time (averaged from $1,000$ tests) for generating $1$ MByte binary stream from all these methods. With a running time of $40$ \textit{ms}, the proposed method is more efficient than the only other theoretical sound RPNG \cite{ shujun2001pseudo}, and is also more efficient than the heuristic designs \cite{alawida2020enhanced} (with running time $2996$ \textit{ms}) and \cite{lv2018novel} (with running time $58$ \textit{ms}), but inferior to the method in \cite{hua2015dynamic} (with running time $16$ \textit{ms}). 
However, looking further at the third and fourth column of Table \ref{Tab:Comparison}, it is clear that the proposed method provides theoretical randomness guarantee while the method in \cite{hua2015dynamic} does not.}

\modified{
\begin{table}[ht]
    \caption{\textcolor{black}{Running time comparison.}}
    \label{Tab:Comparison}
    \centering
    \resizebox{0.4\textwidth}{!}{%
        \begin{tabular}{|c|c|c|c|}
        \hline
        \multicolumn{1}{|c|}{Methods} & \multicolumn{1}{l|}{\shortstack{Running Time\\(1MByte)}}& \multicolumn{1}{l|}{\shortstack{Theoretical\\ Analysis}} & \multicolumn{1}{l|}{\shortstack{Experimental\\Analysis}} \\  \hline
         Ours & 40 \textit{ms}& Yes  & Yes  \\ \hline
        \cite{hua2015dynamic} & $16$ \textit{ms} & No & Yes  \\ \hline
        \cite{alawida2020enhanced} & $2996$ \textit{ms} & No & Yes   \\ \hline
        \cite{lv2018novel}& $58$ \textit{ms}  & No &  Yes  \\ \hline
         \cite{shujun2001pseudo} & $47$ \textit{ms} & Yes &  Yes  \\ \hline
        \end{tabular}%
    }
\end{table}
}

\modified{
To investigate the reason of the above experimental results, the amount of the basic operations for producing $8$ pseudo-random bits is used as the metric to evaluate the complexity of the considered PRNGs.
Referring the arithmetic details of the works in \cite{shujun2001pseudo,hua2015dynamic,alawida2020enhanced,lv2018novel}, the basic operations are counted and the result is listed in Table~\ref{Tab:basicoperations}. 
According to this table, the average number of the basic operations for the proposed method is $21$, which is smaller than that of \cite{shujun2001pseudo,alawida2020enhanced,lv2018novel} but bigger than $13.33$ of \cite{hua2015dynamic}. 
To summarize, the proposed method outperforms most of chaotic PRNGs \cite{shujun2001pseudo,hua2015dynamic,alawida2020enhanced,lv2018novel} in terms of efficiency. 
And the only method \cite{hua2015dynamic} that is more efficient than the proposed method does not provide guaranteed uniform randomness. From this sense, in real applications, one may choose to use the method in \cite{hua2015dynamic} when higher efficiency is needed, while opt to use the proposed method when higher security is desirable. 
}
\begin{table}[ht]
    \caption{\textcolor{black}{Number of basic operations for generating $8$ bits.}}
    \label{Tab:basicoperations}
    \centering
    \resizebox{0.45\textwidth}{!}{%
        \begin{tabular}{|l|l|l|l|l|l|}
        \hline
        \multicolumn{1}{|l|}{No. of Operations} &  \multicolumn{1}{|l|}{Ours} & \multicolumn{1}{|l|}{\cite{hua2015dynamic}} &  
        \multicolumn{1}{|l|}{\cite{alawida2020enhanced}}& \multicolumn{1}{|l|}{\cite{lv2018novel}} & \multicolumn{1}{|l|}{\cite{shujun2001pseudo}}\\  \hline
        No. of Exclusive OR     & $8$   & $0$   & $0$   & $0$ & $0$ \\ \hline
        No. of Interception     & $0$   & $2/3$ & $0$   & $1$  & $0$\\ \hline
        No. of Modulo           & $0$   & $8$   &  $8$  &  $0$ & $0$ \\ \hline
        No. of Compare          & $0$   & $1/3$   & $4$   & $0$ & $32$\\ \hline
        No. of Inversion        & $8$ & $0$ &$0$  &$0$  &$0$ \\ \hline
        No. of Addition/Subtraction     & $9/4$ & $8/3$ &  $248$ & $14$  & $8$  \\ \hline
        No. of Multiplication/Division  & $11/4$   & $5/3$ & $1368$ &  $26$ & $0$\\ \hline
        No. of Real $\to$ Char          & $0$  & $0$ & $0$ & $1$ & $0$ \\ \hline
        Total                           & $21$  & $13.33$ &  $1628$ & $42$ & $40$  \\ \hline
        \end{tabular}%
    }
\end{table}

\section{Conclusion}
\label{sec:conclusion}
Coupling chaotic maps is a \modified{popular method for generating more complicated} dynamic behavior, using for example $1$D and $2$D coupled map lattices in secure communications. This work presents the first theoretic study of the Lyapunov exponents of the $2$D CML model and finds that the maximum LE is solely determined by the local map used for coupling. Moreover, \modified{by bridging true randomness and pseudo-randomness, it lays the theoretic foundation for deriving uniform pseudo-randomness
from digitized chaotic orbits}. Making use of this result, a random number extraction algorithm $\mathbf{E}$ is designed, which produces pseudo-random bits with bias bounded by $O(2^{-z})$, where $z$ is the bit number of the precision. 
Extensive experiments are carried out and the results align perfectly with the theoretic formulas. \modified{Moreover, it is validated that the proposed algorithm possesses good efficiency. 
The theory may provide fresh insights into how chaos can be used to create pseudo-randomness.
Future work will include more performance gains through parallel implementation and theoretical analysis of pseudo-random distributions produced by chaos.
}

\appendix[Proof of Theorem~2 and Property~2]
\label{sec:appendix}
\begin{proof}
Let $P(x)$ be the density function of $x \in [0,1]$, which bounds the first derivative $P'(x)$. Considering that $x$ is represented with $z$ bits, one has $P(w_{z}=0) + P(w_{z}=1) =1$ and
\begin{IEEEeqnarray}{rCl}
 P(w_{z}=0)&=&\sum_{b=1}^{2^{z-1}} \int_\frac{2(b-1)}{2^z}^{\frac{2b-1}{2^z}} P(x) dx,
\label{Eq:expectation0}  \\
 P(w_{z}=1)&=&\sum_{b=1}^{2^{z-1}} \int_\frac{2b-1}{2^z}^{\frac{2b}{2^z}} P(x) dx.
\label{Eq:expectation1}
\end{IEEEeqnarray}
Applying the mean-value theorem to Eqs.~(\ref{Eq:expectation0}) and (\ref{Eq:expectation1}), one gets
\begin{IEEEeqnarray}{rCl}
 P(w_{z}=0)
 &=& \sum_{b=1}^{2^{z-1}}\int_{\frac{2(b-1)}{2^z}}^{\frac{2b-1}{2^z}} P(x) dx \nonumber \\ 
 &=& \int_{0}^\frac{1}{2^z}P(x)dx         +\int_\frac{2}{2^z}^\frac{3}{2^z}P(x)dx+\cdots \nonumber \\
 && 
 +\int_\frac{2^Z-2}{2^z}^{\frac{2^Z-1}{2^z}} P(x)dx \nonumber\\
 &=& \frac{1}{2^z} \left[P(x_1)+P(x_2)+\cdots+P(x_{2^{z-1}})\right], \nonumber
\end{IEEEeqnarray}
where $\frac{2(i-1)}{2^z}\le x_i \le \frac{2i-1}{2^z}$ for $i\in [1,2^{z-1}]$, and similarly,
\begin{IEEEeqnarray}{rCl}
 P(w_{z}=1)
 &=& \sum_{b=1}^{2^{z-1}}\int_{\frac{2b}{2^z}}^{\frac{2b-1}{2^z}} P(x) dx \nonumber \\ 
 &=& \frac{1}{2^z} \left[P(x'_1)+P(x'_2)+\cdots+P(x'_{2^{z-1}})\right], \nonumber
\end{IEEEeqnarray}
where $\frac{2i-1}{2^z}\le x'_i \le \frac{2i}{2^z}$ for $i\in [1,2^{z-1}]$. Finally, one has \begin{IEEEeqnarray}{rCl}
&& \lim_{z \to \infty} \left| P(w_{z}=1)-P(w_{z}=0) \right| \nonumber \\
& \leq & \lim_{z \to \infty} \frac{1}{2^z} \left( \sum_{i=1}^{2^{z-1}} \left| P(x_i) - P(x'_i)\right| \right) \nonumber \\
& \leq & \lim_{z \to \infty} \frac{1}{2^z} \cdot \left( \sum_{i=1}^{2^{z-1}} \left| P'(\bar{x}_i) \right| \cdot \frac{2}{2^z} \right) 
\label{Eq:meanvaluetheorem}\\
& \leq &  \lim_{z \to \infty} \frac{1}{2^z} \cdot \left( 2^{z-1}  \max{ \left| P'(\bar{x}_i) \right| } \right)  \cdot \frac{2}{2^z} \nonumber \\
& \leq &  \lim_{z \to \infty} \frac{1}{2^z} \cdot \left(  \max{ \left| P'(\bar{x}_i) \right| } \right)   \nonumber \\
& =& 0, \nonumber
\end{IEEEeqnarray}
where $\bar{x}_i \in (x_i, x'_i)$ and Eq.~(\ref{Eq:meanvaluetheorem}) is derived based on the mean value theorem.
\end{proof}

\begin{proof}
By assumption, one has
\begin{IEEEeqnarray}{rCL}
|P(x_i=0)- P(x_i=1)| &=& O(2^{-i}),  \nonumber \\
|P(y_j=0) - P(y_j=1)| &=& O(2^{-j}). \nonumber 
\end{IEEEeqnarray}
Since $w= x_i+y_j \pmod{2}$ and  $x_i$ and $y_j$ are independent of each other, one gets 
\begin{IEEEeqnarray}{rCL}
P(w=0)  &=& P(x_i=0)P(y_j=0) + P(x_i=1) P(y_j=1)\nonumber, \\
P(w=1)  &=& P(x_i=0)P(y_j=1) + P(x_i=1) P(y_j=0)\nonumber.
\end{IEEEeqnarray}
The bias of $w$ can then be calculated as 
\begin{IEEEeqnarray}{rCL}
&& |P(w=0) - P(w=1) |  \nonumber \\
&=& |P(x_i=0)P(y_j=0) + P(x_i=1) P(y_j=1) \nonumber \\ 
&&  -  P(x_i=0)P(y_j=1) - P(x_i=1) P(y_j=0)| \nonumber \\
&=& \left|
\left[P(x_i=0)- P(x_i=1)\right] \cdot 
\left[ P(y_j=0) - P(y_j=1)\right] \right| \nonumber \\
&=& O(2^{-(i+j)}). \nonumber
\end{IEEEeqnarray}
Hence, the property is true.
\end{proof}




%
\small{
\bibliographystyle{IEEEtran}
\bibliography{ref}
}

\end{document}